\pdfoutput=1
\newif\ifFull
\Fulltrue
\documentclass[11pt]{article}
\topmargin 0pt
\advance \topmargin by -\headheight
\advance \topmargin by -\headsep
\textheight 9in
\oddsidemargin 0pt
\evensidemargin \oddsidemargin
\marginparwidth 0.5in
\textwidth 6.5in
\widowpenalty=5000
\clubpenalty=5000
\hyphenpenalty=500

\usepackage{graphicx}
\usepackage{cite}
\usepackage[noend]{algorithmic}
\usepackage{times}
\usepackage{compress}

\usepackage{latexsym}

\usepackage{amsmath}

\usepackage{url}
\setlength{\pdfpagewidth}{8.5in}
\setlength{\pdfpageheight}{11in}

%
%
\makeatletter
\def\@begintheorem#1#2{\sl \trivlist \item[\hskip \labelsep{\bf #1\ #2:}]}
\def\@opargbegintheorem#1#2#3{\sl \trivlist
      \item[\hskip \labelsep{\bf #1\ #2\ #3:}]}
\makeatother

\DeclareMathOperator{\poly}{poly}
\DeclareMathOperator{\E}{\bf E}
\newcommand{\eat}[1]{{}}

\newenvironment{proof}{\noindent{\bf Proof:}}{\hspace*{\fill}\rule{6pt}{6pt}\bigskip}

\newtheorem{theorem}{Theorem}
\newtheorem{lemma}[theorem]{Lemma}

\newtheorem{definition}{Definition}

\title{\huge Fully De-Amortized Cuckoo Hashing for\\
       \huge Cache-Oblivious Dictionaries and Multimaps}
\author{
Michael T. Goodrich \\[3pt]
Dept. of Computer Science \\ 
University of California, Irvine \\
Irvine, CA 92697-3435 USA \\
\textsf{goodrich(at)acm.org} \\[5pt]
\and 
Daniel S. Hirschberg \\[3pt]
Dept. of Computer Science \\ 
University of California, Irvine \\
Irvine, CA 92697-3435 USA \\
\textsf{dan(at)ics.uci.edu} \\[5pt]
\and
Michael Mitzenmacher\\[3pt]
Dept. of Computer Science \\ 
Harvard University \\
Cambridge, MA 02138 \\
\textsf{michaelm(at)eecs.harvard.edu}
\and
Justin Thaler \\[3pt]
Dept. of Computer Science \\ 
Harvard University \\
Cambridge, MA 02138 \\
\textsf{jthaler(at)fas.harvard.edu}
}

\date{\relax}

\begin{document}

\maketitle

\begin{abstract}
A \emph{dictionary} (or \emph{map}) 
is a key-value store that requires all keys be
unique, and
a \emph{multimap} is a key-value store that allows for multiple
values to be associated with the same key.
We design hashing-based indexing schemes for dictionaries and multimaps
that achieve worst-case optimal performance for lookups and updates, with
a small or negligible probability the data structure will require
a rehash operation, depending on whether we are working in the
the external-memory (I/O) model or one of the well-known versions
of the Random Access Machine (RAM) model.
One of the main features of our constructions is that they
are \emph{fully de-amortized}, meaning that their performance bounds
hold without one having to tune their constructions with certain performance
parameters, such as
the constant factors in the exponents of failure probabilities
or, in the case of the external-memory model,
the size of blocks or cache lines and the size of internal memory
(i.e., our external-memory algorithms are \emph{cache oblivious}).
Our solutions are based on a fully de-amortized
implementation of cuckoo hashing, which may be of independent interest.
This hashing scheme uses two cuckoo hash tables, one ``nested''
inside the other, with one serving as a primary structure 
and the other serving as an auxiliary supporting queue/stash structure that is 
super-sized with respect to traditional 
auxiliary structures but nevertheless adds
negligible storage to our scheme.
This auxiliary structure allows the success probability for cuckoo
hashing to be very high, which is useful in cryptographic or data-intensive
applications.
\ifFull


\fi
\end{abstract}

\thispagestyle{empty}
\setcounter{page}{0}
\clearpage

\vspace{-3mm}
\section{Introduction}
A \emph{dictionary} (or \emph{map})
is a key-value store that requires all keys be unique and
a \emph{multimap}~\cite{agmt-emm-11}
is a key-value store that allows for multiple
values to be associated with the same key.
Such structures are ubiquitous in the ``inner-loop''
computations involved in various algorithmic applications.
Thus, we are interested in implementations of these
abstract data types that are based on hashing and use $O(n)$ words of storage,
where $n$ is the number of items in the dictionary or multimap.

In addition, because such solutions
are used in real-time applications,
we are interested in implementations that are \emph{de-amortized},
meaning that they have asymptotically optimal worst-case 
lookup and update complexities, 
but may have small probabilities of overflowing their memory spaces.
Moreover, we would like these lookup and update bounds to 
hold without requiring that we build such a data structure 
specifically ``tuned'' to certain performance parameters, since it is
not always possible to anticipate such parameters at the time such a
data structure is deployed (especially if the length $p(n)$ of the sequence of 
operations on the structure is not known in advance).
For instance, if we wish for a failure probability that is bounded by
$1/n^c$ or $1/n^{c\log n}$, for some constant $c>0$,
we should not be required to build the data structure
using an amount of space or other components that are parameterized by $c$.
(For example, our first construction gives a single algorithm parameterized only by $n$, and for which, for \emph{any} $s>0$, lookups take time at most $s$ with
probability that depends on $s$. Previous constructions are parameterized by $c$ as well, and lookups take
time $c$ with probability $1-1/n^{c'}$ for fixed constants $c, c'$.)
Likewise, in the external-memory model, we would like
solutions that are \emph{cache-oblivious}~\cite{flpr-coa-99},
meaning that they achieve their performance bounds
without being tuned for the parameters of the memory hierarchy, like
the size, $B$, of disk blocks, or the size, $M$, of internal memory.
We refer to solutions that avoid such parameterized constructions as
\emph{fully de-amortized}.

By extending and combining various ideas in the algorithms literature, we show 
how to develop fully de-amortized
data structures, based on hashing, for dictionaries and multimaps.
Without specifically tuning our structures to constant factors in the
exponents, we provide solutions with performance bounds that hold
\emph{with high probability}, meaning that they hold with
probability $1-{1}/{\poly(n)}$, or \emph{with overwhelming probability},
meaning that they hold with probability $1-1/n^{\omega(1)}$. 
We also use the term 
\emph{with small probability} to mean ${1}/{\poly(n)}$,
and \emph{with negligible probability} to mean $1/n^{\omega(1)}$.
\ifFull
Briefly, we are able to achieve the following bounds:
\begin{itemize}
\item For dictionaries, we present two fully de-amortized constructions. 
The first 
is for the Practical RAM model~\cite{m-lbstr-96},
and it performs all standard operations 
(lookup, insert, delete) in $O(1)$ steps in the worst case, 
where all guarantees hold for any polynomial
number of operations  
{\em with high probability}. The second works in 
the external-memory (I/O) model, the standard RAM model,
and the AC$^0$ RAM model~\cite{thorup07},
and also achieves O(1) worst-case operations, where these guarantees hold
{\em with overwhelming probability}.
\item 
For multimaps, we provide a fully de-amortized scheme that in addition to the standard
operations can quickly return or delete {\em all} 
values associated with a key.  
Our construction
is suitable for external memory and is cache oblivious in this
setting. For instance, our external-memory solution returns all $n_k$ values
associated with a key in $O(1+n_k/B)$ I/Os, where $B$ is the block size,
but it is not specifically tuned with respect to the parameter $B$.
\end{itemize}
\fi
Our algorithms use linear space and work in the online setting, where
each operation must be completed before performing the next.

\ifFull
Our solutions for dictionaries and multimaps include the design of a
variation on \emph{cuckoo hash tables}, which were presented by Pagh
and Rodler~\cite{pr-ch-04} and studied by a variety of other
researchers (e.g.,
see~\cite{devroye2003cuckoo,k-ivcha-10,naor-history}).  These
structures use a freedom to place each key-value pair in one of two
hash tables to achieve worst-case constant-time lookups and removals
and amortized constant-time insertions, where operations fail with
polynomially small probability.  
We obtain the first fully de-amortized variation on cuckoo hash tables,
with negligible failure probability to boot.
\fi

\vspace{-3mm}
\subsection{Motivations and Models}
Key-value associations are used in many applications,
and hash-based dictionary schemes are well-studied in the 
literature (e.g., see~\cite{clrs-ia-01}).
Multimaps~\cite{agmt-emm-11} 
are less studied, although a multimap can be
viewed as a dynamic \emph{inverted file} or \emph{inverted index}
(e.g., see Knuth~\cite{k-ss-73}).
\ifFull
Given a collection, $\Gamma$, of
documents, an inverted file is an indexing strategy that allows one to
list, for any word $w$, all the documents in $\Gamma$ where $w$
appears.  
Multimaps also provide a natural representation framework for adjacency 
lists of graphs, with nodes being keys and 
adjacent edges being values associated with a key.  
For other applications, please see Angelino {\it et al.}~\cite{agmt-emm-11}.  
\fi


Hashing-based implementations of dictionaries and multimaps must
necessarily be designed in a computational model that supports
indexed addressing, such as the Random Access Machine (RAM) model
(e.g., see~\cite{clrs-ia-01}) or the external-memory (I/O) model
(e.g., see~\cite{av-iocsrp-88,v-emads-01}).
Thus, our focus in this paper is on solutions in such models.
There are, in fact, several versions of the RAM model\ifFull, and,
rather than insist that our solutions be implemented in a specific
version, we consider solutions for several of the most well-known
versions\fi:
\begin{itemize}
\item
\emph{The standard RAM}:
all arithmetic and comparison operations, 
including integer addition, subtraction, multiplication, and division,
are assumed to run in $O(1)$ time\footnote{Most, but not all, 
	researchers also assume that the standard RAM supports 
	bit-wise Boolean operations, such as AND, OR, and XOR; hence, we 
	also allow for these as constant-time operations 
	in the standard RAM model.}.
\item
\emph{The Practical RAM}~\cite{m-lbstr-96}:
integer addition, as well
as bit-wise Boolean and SHIFT operations on words, are assumed to run in
$O(1)$ time, but not integer multiplication and division.
\item
\emph{The AC$^0$ RAM}~\cite{thorup07}:
any AC$^0$ function can be performed in
constant time on memory words, 
including addition and subtraction, as
well as several bit-level operations included in the instruction sets
of modern CPUs, such as Boolean and SHIFT operations.
This model does not allow for constant-time multiplication, however, since
multiplication is not in AC$^0$~\cite{fredmanwillard}.
\end{itemize}
Thus, the AC$^0$ RAM is arguably the most realistic, the standard RAM
is the most traditional,
and the Practical RAM is a restriction of both.
As we are considering dictionary and multimap solutions in all of these
models, we assume that the hash functions being used are 
implementable in the model in question and that they run in
$O(1)$ time and are sufficiently random to support cuckoo hashing.
This assumption is supported in practice, for instance, by the fact that 
one of the most widely-used 
hash functions, SHA-1, can be implemented in $O(1)$ time in
the Practical RAM model. See also Section \ref{sec:extensions}
for discussion of the theoretical foundations of this assumption.

A framework that is growing in interest and impact for designing 
algorithms and data structures
in the external-memory model is the \emph{cache-oblivious}
design paradigm, introduced by Frigo
{\it et al.}~\cite{flpr-coa-99}.  
In this external-memory paradigm, one designs
an algorithm or data structure to minimize the number of I/Os
between internal memory and external memory, but the algorithm must not be
explicitly parameterized by the block size, $B$, or the
size of internal memory, $M$.  The advantage of this
approach is that one such algorithm can comfortably scale across all levels
of the memory hierarchy and can also be a better
match for modern compilers that perform predictive memory fetches.

Our notion of a ``fully de-amortized'' data structure extends the
cache-oblivious design paradigm in two ways.
First, it requires that all operations be characterized in terms of
their worst-case performance, not its amortized performance. Second, it extends to internal-memory models
the notion of avoiding specific tuning of
the data structure in terms of non-essential parameters.
Formally, we say that
a data structure is \emph{fully de-amortized} if its performance
bounds hold in the worst case and the only parameter 
its construction details depend is $n$, the number of items it stores.
Thus, a fully de-amortized data structure implemented 
in external-memory
is automatically cache-oblivious and its I/O bounds hold
in the worst case.

In addition to being fully de-amortized,
our strongest constructions have performance bounds that hold with
overwhelming probability. 
While our aim of achieving structures that provide
worst-case constant time operations with overwhelming
probability instead of with high probability may seem like a
subtle improvement, there are many applications where it is essential.
In particular, it is common in cryptographic applications to aim for
negligible failure probabilities.
For example, cuckoo hashing structures 
with negligible failure probabilities
have recently found applications in oblivious RAM
simulations~\cite{GoodrichMitzenmacherICALParXivVersion}. 
Moreover, a significant motivation for de-amortized 
cuckoo hashing is
to prevent timing attacks and clocked adversaries 
from compromising a system \cite{arbitman2009amortized}. 
Finally, guarantees that hold with overwhelming 
probability allow us to handle super-polynomially
long sequences of updates, as long as the total 
number of items resident in the dictionary
is bounded by $n$ at all times. 
This may be useful in long-running or data-intensive applications.

\vspace{-3mm}
\subsection{Previous Related Work}
Since the introduction of the
cache-oblivious framework by
Frigo {\it et al.}~\cite{flpr-coa-99},
several cache-oblivious algorithms have subsequently been presented,
including cache-oblivious B-trees~\cite{bdf-cob-00},
cache-oblivious binary search trees~\cite{bfj-cost-02},
and cache-oblivious sorting~\cite{bfv-ecos-08}.
Pagh {\it et al.}~\cite{pwyz-coh-10}
describe a scheme for cache-oblivious hashing, which is based on
linear probing and achieves $O(1)$ expected-time performance for
lookups and updates, but it does not achieve constant
time bounds for any of these operations in the worst case.

As mentioned above,
the multimap ADT is related to the inverted file and inverted
index structures, which are well-known in 
text indexing applications (e.g., see Knuth~\cite{k-ss-73})
and are also used in
search engines (e.g., see Zobel and Moffat~\cite{zm-iftse-06}).
Cutting and Pedersen~\cite{cp-odiim-90} describe an inverted file
implementation that uses B-trees for the indexing structure and
supports insertions, but doesn't support
deletions efficiently.
More recently, Luk and Lam~\cite{ll-eimef-07} describe an internal-memory
inverted file implementation based on hash tables with chaining, but
their method also does not support fast item removals.
Lester {\it et al.}~\cite{lmz-eoict-08,lzw-eoimc-06} 
and B\"{u}ttcher {\it et al.}~\cite{bcl-himgt-06}
describe external-memory
inverted file implementations that support item insertions only.
B\"{u}ttcher and Clarke~\cite{bc-itvqt-05}
consider trade-offs for allowing for
both item insertions and removals,
and Guo {\it et al.}~\cite{gcxw-eoli-07}
give a solution for performing such operations
by using a B-tree variant.
Finally, Angelino {\it et al.}~\cite{agmt-emm-11} describe an
efficient external-memory data structure for the multimap ADT, but like the
above-mentioned work on inverted files, their method is not
cache-oblivious; hence, it is not fully de-amortized.

Also as mentioned above, our solutions 
include the design of a variation on \emph{cuckoo hash tables},
which were presented by Pagh and Rodler~\cite{pr-ch-04}
and studied by a variety of other researchers 
(e.g., see~\cite{devroye2003cuckoo,k-ivcha-10,naor-history}).
These structures use a freedom to place 
each key-value pair in one of two hash tables to achieve worst-case
constant-time lookups and removals and amortized constant-time
insertions with high probability.
Kirsch and Mitzenmacher~\cite{km-uqdac-07} and
Arbitman {\it et al.}~\cite{arbitman2009amortized} study
a method for de-amortizing cuckoo hashing, which achieves
constant-time lookups, insertions, and deletions with high probability,
and uses space $(2+\epsilon) n$ for any constant $\epsilon > 0$ 
(as is standard in cuckoo hashing). 
These methods 
are not fully de-amortized, however, since, in order to achieve a failure
probability of $1/n^c$, 
they construct an auxiliary structure consisting of $O(c)$ small lookup tables.
In contrast, neither of our dictionary constructions are parameterized by $c$;
furthermore our second construction provides guarantees that hold with
overwhelming probability rather than with high probability.
In a subsequent paper,
Arbitman {\it et al.}~\cite{ans-bchcw-10} study
a hashing method that achieves worst-case constant-time lookups, insertions, 
and removals with high probability while maintaining 
loads very close to 1, 
but their method also is not fully de-amortized.

Kirsch, Mitzenmacher, and Wieder~\cite{kmw-chs-09}
introduced the notion of a stash for cuckoo hashing, which allows the
failure probability to be reduced to $O(1/n^{\alpha})$,
for any $\alpha > 0$,
by using a constant-sized adjunct memory to store items that
wouldn't otherwise be able to be placed.
One of our novel additions in this paper
is to demonstrate that by using super-constant sized stashes, along with
a variation of the q-heap data structure, we can ensure failures happen
only with negligible probability while maintaining constant time lookup
and delete operations.

\vspace{-3mm}
\subsection{Our Results}
In this paper we describe efficient hashing-based
implementations of the dictionary and multimap ADTs.
Our constructions are fully de-amortized and are alternately
designed for the external-memory (I/O) model and 
the well-known versions of the RAM model mentioned above.
Because they are fully de-amortized,
our external-memory algorithms are
cache-oblivious.

We begin by presenting two new 
fully de-amortized cuckoo hashing schemes in 
Section \ref{sec:cuckoo}. Both of our constructions provide 
$O(1)$ worst-case lookups, insertions, and deletions, where the guarantees
of the first construction (in the Practical RAM model) 
hold with high probability,
and the guarantees of the second construction (in the 
external-memory model, the standard RAM model,
and the AC$^0$ RAM model)
hold with overwhelming probability.
Moreover,
these results hold even if we use polylog($n$)-wise independent hash functions. 
Like the construction of Arbitman {\it et al.}~\cite{arbitman2009amortized}, 
both of our dictionaries use space $(2+\epsilon)n$ 
for any constant $\epsilon>0$,
though when combined with another result of 
Arbitman {\it et al.}~\cite{ans-bchcw-10}, 
we can achieve $(1+\epsilon)n$ words
of space for any constant $\epsilon > 0$ (see Section \ref{sec:extensions}).
Our second dictionary can be seen as 
a quantitative improvement over 
the previous solution
for de-amortized cuckoo tables~\cite{arbitman2009amortized}, 
as the guarantees of \cite{arbitman2009amortized} only
hold with high probability (and their solution is not fully
de-amortized).

Both of our dictionary constructions utilize a cuckoo hash table 
that has another cuckoo hash table
as an auxiliary structure.
This secondary cuckoo table functions simultaneously as an
operation queue (for the sake of 
de-amortization~\cite{arbitman2009amortized,ans-bchcw-10,km-uqdac-07})
and a stash (for the sake of improved probability of success \cite{kmw-chs-09}).

Our second construction also makes use of a data structure for the AC$^0$ RAM model we call the
\emph{atomic stash}. This structure maintains a small dictionary, of
size at most $O(\log^{1/2} n)$, so as to support constant-time
worst-case insertion, deletions, and lookups, using $O(\log^{1/2} n)$
space.  
This data structure is related to the q-heap or atomic heap
data structure of Fredman and Willard~\cite{fredmanwillard} (see
also~\cite{Willard}), which requires the use of lookup tables of size
$O(n^\epsilon)$ and is limited to sets of size $O(\log^{1/6} n)$.
Andersson {\it et al.}~\cite{Andersson1999337}
and Thorup~\cite{thorup07} give
alternative implementations of these data structures in the AC$^0$
RAM model,
but these methods still need
precomputed functions encoded in table lookups or have time bounds
that are amortized instead of worst-case.  
Our methods instead make no
use of precomputed lookup tables, hold in the worst case, and use
techniques that are simpler than those of Anderson 
{\it et al.}~and Thorup.  
We emphasize that our
results do not depend on our specific atomic stash implementation; one
could equally well use q-heaps, atomic heaps, or other data structures
that allow constant-sized lookups into data sets of size $\omega(1)$
(under suitable assumptions, such as keys fitting into a memory word)
in order to obtain bounds 
that hold with overwhelming probability in the standard or AC$^0$ RAM
models.  
We view this combination of cuckoo hash tables with atomic stashes (or
other similar data structures) as an important contribution, as they
allow super-constant sized stashes for cuckoo hashing while still
maintaining constant time lookups.

In Section \ref{sec:multimap}, we also show how to build on our 
fully de-amortized 
cuckoo hashing scheme
to give an efficient cache-oblivious multimap implementation
in the external memory model. 
Our multimap implementation utilizes two instances of the nested cuckoo
structure of Section \ref{sec:cuckoo},
together with arrays for storing collections of key-value pairs that
share a common key. 
This implementation assumes that there is a cache-oblivious mechanism
to allocate and deallocate power-of-two sized 
memory blocks with constant-factor space and I/O overhead; 
this assumption is theoretically justified by the results of 
Brodal {\it et al.}~\cite{buddysystem}.
A lower bound of Verbin and Zhang~\cite{verbin2010} 
implies that our bounds are optimal up to
constant factors in the external memory model, 
even if we did not support fast findAll and removeAll operations.

In addition to a theoretical analysis of our data structures, we have 
performed preliminary experiments with an implementation, and a later
writeup will include full details of these.

The time 
bounds we achieve for the dictionary and multimap ADT methods are shown in
Table~\ref{tbl:bounds}.
Our space bounds are all $O(n)$, for storing a dictionary or multimap
of size at most $n$.

\vspace{-3pt}

\begin{table*}[hbt]
\begin{center}
\small{
\begin{tabular}{|c|c|c|}
\hline
\textbf{Method} 
& \textbf{Dictionary I/O Performance} 
& \textbf{Multimap I/O Performance} \\
\hline
\rule[-2pt]{0pt}{15pt} add$(k,v)$ 
& $O(1)$ 
& $O(1)$ \\
\hline
\ifFull
\rule[-2pt]{0pt}{15pt} containsKey$(k)$ 
& $O(1)$ 
& $O(1)$ \\
\hline
\rule[-2pt]{0pt}{15pt} containsItem$(k,v)$ 
& $O(1)$ 
& $O(1)$ \\
\hline
\fi
\rule[-2pt]{0pt}{15pt} remove$(k,v)$ 
& $O(1)$ 
& $O(1)$ \\
\hline
\rule[-2pt]{0pt}{15pt} get$(k)$/getAll$(k)$ 
& $O(1)$ 
& $O(1+n_k/B)$ \\
\hline
\rule[-2pt]{0pt}{15pt} removeAll$(k)$ 
& --
& $O(1)$ \\
\hline
\end{tabular}
}
\end{center}
\small{
\vspace{-6pt}
\caption{\label{tbl:bounds} 
Performance bounds for our dictionary
and multimap
implementations, 
which all hold in the worst-case with overwhelming probability,
except for implementations in the Practical RAM model, in which case
the above bounds hold with high probability.
These bounds are asymptotically
optimal. 
We use $B$ to denote the block size, 
and
$n_k$ to denote the number of items with key equal to $k$. 
}
}
\end{table*}

\vspace{-8pt}
\section{Nested Cuckoo Hashing}
\label{sec:cuckoo}
In this section, we describe both of our nested cuckoo hash table data structures,
which provide fully de-amortized dictionary data structures with
worst-case $O(1)$-time lookups and removals and $O(1)$-time
insertions with high and overwhelming probability, respectively.
At a high level, this
structure is similar to that of Arbitman {\it et al.}~\cite{arbitman2009amortized} and
Kirsch and Mitzenmacher~\cite{km-uqdac-07}, in that our scheme and these
schemes use a cuckoo hash table as a primary structure, and an auxiliary queue/stash structure to de-amortize
insertions and reduce the failure probability. But
our approach substantially differs from prior methods
in the details of the auxiliary structure. In particular, our auxiliary structure is 
itself a full-fledged cuckoo hash table, with its own (much smaller) queue and stash,
 whereas prior methods use more
traditional queues for the auxiliary structure.

\ifFull
Our two dictionary constructions are identical, except for the implementation of the ``inner'' cuckoo hash table's queue and stash. 
Below, we describe both constructions simultaneously,
highlighting their differences when necessary.
\fi

\vspace{-2pt}
\subsection{The Components of Our Structure}
Our dictionaries maintain a dynamic set, $S$, of at most $n$
items.
Our primary storage structure is a cuckoo hash table, $T$, subdivided
into two subtables, $T_0$ and $T_1$, 
together with two random hash
functions, $h_0$ and $h_1$.
Each table $T_i$ stores at most one item in each of its $m$ cells,
and we assume $m \geq (1+\epsilon) n$ where $n$ is the total number of items,
for some fixed constant $\epsilon>0$.
For any item $x=(k,v)$ that is stored in $T$, $x$ will either be stored
in cell $T_0[h_0(k)]$
or in cell $T_1[h_1(k)]$.
Some of the items in $S$ may not be stored in $T$, however. They will
instead be stored in an auxiliary structure, $Q$.

The structure $Q$ is simultaneously two double-ended 
queues (deques), both of which support fast enqueue
and dequeue operations at either the front or the rear, 
and a cuckoo hash table with its own queue and stash. 
Because $Q$ is a cuckoo hash table,
it supports
worst-case constant-time lookups for items based on their keys.
The two deques correspond to the \emph{queue} and \emph{stash} of the primary
structure; we call them \text{OuterQueue} and \text{OuterStash} respectively. 
Each item $x$ that is stored in $Q$ is therefore also augmented with a prev
pointer, which points to the predecessor of $x$ in its deque, and a next
pointer, which points to the successor of $x$ in its deque.
We also augment both deques with front and rear pointers, which respectively
point to the first element in the deque and the last element in
the deque.

The ``inner'' cuckoo hash structure consists of two tables, $R_0$ and $R_1$, each having $m^{2/3}$
cells, together with two random hash functions, $f_0$ and $f_1$, as
well as a small list, $L$, which is used to implement two deques that we call \text{InnerQueue} and \text{InnerStash} respectively. 
Each item $x=(k,v)$ that is stored in $Q$ will be located 
in cell $R_0[f_0(k)]$
or in cell $R_1[f_1(k)]$, or it will be in the list $L$. The only manner in which our two constructions differ is 
in the implementation of $L$.

\ifFull
In summary, our dictionary data structure consists simply of the
``outer'' cuckoo table $T$ with a queue and stash, 
and the ``inner'' cuckoo table $Q$, which contains its own queue and stash.
\fi

\subsection{Operations on the Primary Structure}
To perform a lookup, that is, a get$(k)$,
we try each of the cells
$T_0[h_0(k)]$,
$T_1[h_1(k)]$,
$R_0[f_0(k)]$,
and $R_1[f_1(k)]$, and 
perform a lookup in $L$, until we
either locate an item, $x=(k,v)$,
or we determine that there is no item in these locations with key equal to
$k$, in which case we conclude that there is no such item in
$S$.

Likewise, to perform a remove$(k)$ operation, we first perform a
get$(k)$ operation. 
If an item $x=(k,v)$ is found in one of the locations
$T_0[h_0(k)]$ or
$T_1[h_1(k)]$,
then we simply remove this item from this cell.
If such an $x$ is found in
$R_0[f_0(k)]$,
or $R_1[f_1(k)]$, or in the structure $L$, then we remove $x$ from this
cell and we remove $x$ from its deque(s) by updating the prev and
next pointers for $x$'s neighbors so that they now point to each
other. 

Performing an add$(k,v)$ operation is somewhat more involved.
We begin by performing an enqueueLast$(x$, $0,$ $\text{OuterQueue})$ operation, which
adds $x=(k,v)$ at the end of the deque \text{OuterQueue},
together with the bit $0$ to indicate that $x$
should next be inserted in $T_0$.
Then, for constants $\alpha, \alpha' \ge 1$, set in the analysis, we perform
$\alpha$ (outer) \emph{insertion substeps}, followed by $\alpha'$ (inner) insertion substeps.
Each outer insertion substep begins by performing a dequeueFront(\text{OuterQueue})
operation to remove the pair $((k,v),b)$ at the front of the 
deque.
If the cell $T_b[h_b(k)]$ is empty, then we add $(k,v)$ to this cell
and this ends this substep.
Otherwise, we evict the current item, $y$, in 
the cell $T_b[h_b(k)]$, replacing it with $(k,v)$.
If this insertion substep just created a second cycle in the insertion
process (which we can detect by a simple marking scheme, using $O(\log^2 n)$ space with overwhelming probability)\footnote{\cite{arbitman2009amortized} refers
to this as a \emph{cycle-detection mechanism}, and notes there are many possible instantiations.
See also~\cite{kmw-chs-09}.}, 
then we complete the insertion substep by
performing an enqueueLast$(y,b', \text{OuterStash})$ operation, 
where $b'=(b+1)\bmod 2$.
If this insertion substep has not created a second cycle, however, then we
complete the insertion substep
by performing an enqueueFirst$(y,b', \text{OuterQueue})$ operation, 
where $b'=(b+1)\bmod 2$, which adds the pair $(y,b')$ to the front of the primary structure's queue.
Thus, by design, an add operation takes $O(\alpha)=O(1)$ time in the worst case, assuming the operations on $Q$ run in $O(1)$ time and succeed.

Finally, similar to \cite{yuriy_thesis}, every $m^{1/4}$ operations we try to insert an element from the stash
by performing a dequeueFront(\text{OuterStash})
operation to remove the pair $((k,v),b)$ at the front of the 
deque and then spending $2$ moves trying to insert $(k, v)$. If no free slot is found,
we return the current element from the connected component of $(k, v)$ to the front of the stash.
This ensures that items belonging to connected components that have had cycles
removed via deletions do not remain in the stash long after the deletions occur.
All constants mentioned throughout are chosen for theoretical convenience; no effort 
has been made to optimize them.

\subsection{Operations on the Auxiliary Structure}
As mentioned above,
the auxiliary structure, $Q$, is a standard cuckoo
table of size $m'=m^{2/3}$ augmented with its own (inner) queue and stash maintained via the structure $L$,
and pointers to give $Q$ the 
functionality of two double-ended queues, called \text{OuterQueue} and \text{OuterStash}.
The enqueue and dequeue operations to \text{OuterQueue} and \text{OuterStash}, therefore, involve standard
$O(1)$-time pointer updates to maintain the deque property,
plus insertion and deletion
algorithms for the inner cuckoo table. Our inner insertion algorithm is different from our outer one in that we do not immediately
place items on the back of the inner queue. Instead, on an insertion of item $(k, v)$, we spend 16 steps trying to insert $(k, v)$ into
the inner cuckoo table immediately. 
If the insertion does not complete in 16 steps, we place the current element from the connected component of $(k, v)$ in the back of \text{InnerQueue}.
The reason for this policy is that, as we will show, with overwhelming probability almost every item in the inner table can be
inserted in 16 steps, due to the extreme sparsity of the table.
Finally, every $m^{1/6}$ ``inner'' operations we additionally spend 16 moves trying to insert an element from the front of \text{InnerQueue} and
a single move trying to insert an element from the front of \text{InnerStash}, returning elements to the back of \text{InnerQueue}
or \text{InnerStash} in the event that a vacant slot has not been found.
The purpose of this is to ensure that items whose connected components have shrunk due to deletions do not remain
in the inner queue or stash unnecessarily. 


The only manner in which our two constructions differ is 
in the implementation of $L$, which supports the inner stash and inner queue.  In our first construction, for the Practical
RAM model, 
$L$ is simply two double-ended queues (one for \text{InnerQueue}, and one for \text{InnerStash}),
and when performing a lookup in $L$, we simply look at all the cells in both deques.
In our second construction, we implement $L$ using a data structure we call the \emph{atomic stash}.
This structure maintains a small dictionary, of size at most 
$O(\log^{1/2} n)$, so as to support constant-time worst-case insertion, deletions, and lookups, using 
$O(\log^{1/2} n)$ space. Thus, in our second construction, the structure $L$ is simultaneously
two dequeues and an atomic stash to enable fast lookups into $L$. Details are in Appendix \ref{app:stash}.

\section{Analysis}

The critical insights into why such a standard approach is sufficient
to support fully de-amortized constant-time updates and lookups 
for the auxiliary structure, $Q$, 
are based on enhancing the analysis of previous work
for cuckoo hash tables.

\begin{definition} A sequence $\pi$ of insert, delete, and lookup operations is \emph{n-bounded} if at any point in time 
during the execution of $\pi$ the data structure contains at most $n$ elements. \end{definition}

We prove the following two theorems. 
\begin{theorem} \label{thm:fastinsertsinner}
Let $C$ be the inner cuckoo hash table consisting
of two tables of size $m^{2/3}$, together with a queue/stash $L$ as described above. For any polynomial $p(n)$, any 
$m^{1/3}$-bounded sequence of 
operations $\pi'$ on $C$ of length $p(n)$, and any $s \leq m^{1/6}$, every insert into $C$ completes in O(1)
steps, and $L$ has size at most $s$, with probability at least $1-p(n)/m^{\Omega(\sqrt{s})}$.
\end{theorem}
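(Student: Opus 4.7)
The plan is to exploit the extreme sparsity of the inner cuckoo table. Since $C$ has $2m^{2/3}$ cells storing at most $m^{1/3}$ items, the load factor is $\alpha = \Theta(m^{-1/3})$, which tends to zero. In this subcritical regime, the bipartite cuckoo graph (cells as vertices, items as edges) behaves like a sparse random graph whose connected components are, with overwhelming probability, small trees. The argument will proceed by (i) bounding component sizes, (ii) tying component size to the behavior of the inner insertion algorithm, and (iii) combinatorially bounding the probability that many items coexist in $L$.

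First, I would show that for any vertex $v$ in the cuckoo graph, the probability that its component has size $\geq t$ is at most $(c\alpha)^{\Omega(t)} = m^{-\Omega(t/3)}$, via a standard branching-process argument that only relies on pairwise-independent edge placements (hence compatible with polylog-wise independent hashing). Next, I would observe that if the connected component of the newly inserted item $x$ is a tree of size less than a small constant (say $8$), the cuckoo eviction sequence placed along that tree terminates and installs $x$ within $16$ moves. Consequently, $x$ is deposited into $L$ only when its component either (a) has size $\geq 8$ or (b) is ``over-saturated,'' i.e., contains a cycle so that strictly more items than cells are needed. The periodic background processing every $m^{1/6}$ inner operations guarantees that items whose components have since shrunk because of deletions are evicted from $L$, so at any instant $|L|$ is essentially the number of items presently lying in over-saturated or oversized components, plus a small recent backlog.

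The combinatorial core is bounding the probability of the bad event $\{|L|\geq s\}$. For $|L|$ to reach $s$, the cuckoo graph restricted to currently-resident items must contain a witness subgraph (a union of short cycles, long paths, or over-saturated components) whose total number of edges is $\Omega(\sqrt{s})$: the ``$\sqrt{s}$'' arises because each elementary bad structure contributes $O(\sqrt{s})$ stashed items while using $\Omega(\sqrt{s})$ edges, and minimizing edges per stashed item across the possible configurations yields the square-root dependence. Since each edge is present independently with probability $\Theta(\alpha)=\Theta(m^{-1/3})$, the expected number of such witness subgraphs of size $k$ is bounded by $(C\alpha)^{\Omega(k)} = m^{-\Omega(k/3)}$; setting $k=\Theta(\sqrt{s})$ gives a per-window failure probability of $m^{-\Omega(\sqrt{s})}$. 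A union bound over the $O(p(n))$ operations (equivalently, $O(p(n)/m^{1/6})$ background-processing windows) then yields the claimed $p(n)/m^{\Omega(\sqrt{s})}$ bound. The $O(1)$ per-insert worst-case running time is by design: the inner algorithm spends a fixed $16$ moves plus $O(1)$ amortized queue/stash work; on the event that $|L|\leq s\leq m^{1/6}$, no queue or stash operation ever overflows.

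The main obstacle I expect is the bookkeeping between the dynamic population of $L$ and the static combinatorial structure of the cuckoo graph. Items are inserted and removed over a polynomially long sequence, so one cannot simply union-bound over all items ever inserted, nor can one ignore deletions that ``heal'' previously over-saturated components. The argument must be carefully localized to time windows of length $O(m^{1/6})$ (aligned with the background processing schedule) and restricted to the cuckoo graph induced by items resident in that window, so that witness subgraphs correspond to genuinely stashed items. Making this localization rigorous, while preserving the $m^{-\Omega(\sqrt{s})}$ tail bound against adversarial orderings of insertions and deletions, is the delicate step.
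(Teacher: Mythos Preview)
Your high-level plan matches the paper's: exploit sparsity so that components of the inner cuckoo graph are tiny, tie the $16$-move insertion rule to component size via the standard ``at most $2|C(x)|$ moves'' bound, and then argue combinatorially that $|L|\ge s$ forces a rare configuration. The dynamic bookkeeping you flag as the main obstacle is exactly what the paper handles with its ``real'' versus ``effective'' queue/stash distinction and an induction over windows.

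However, your derivation of the $\sqrt{s}$ exponent has a real gap. You write that the witness subgraph must have $\Omega(\sqrt{s})$ edges because ``each elementary bad structure contributes $O(\sqrt{s})$ stashed items while using $\Omega(\sqrt{s})$ edges.'' This is not true as stated: a single component of size $9$ uses $9$ edges and can contribute $9$ queued items, so nothing a priori forces $\Omega(\sqrt{s})$ edges in the witness. The paper obtains $\sqrt{s}$ via a two-step case split that you are missing. First, it conditions on the event that \emph{no} component exceeds size $\sqrt{s}$; this already fails with probability at most $m^{-\Omega(\sqrt{s})}$ by the component-size tail bound. Second, \emph{given} that cap, having $s$ items in components of size $>8$ forces at least $s/\sqrt{s}=\sqrt{s}$ distinct such components; one then picks a representative vertex per component and union-bounds over $\binom{m'}{\sqrt{s}}$ choices, each representative being ``in a big component'' with probability $O(m^{-8/3})$. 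It is the balance between these two cases that produces $\sqrt{s}$, not any per-structure edge-to-item ratio.

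Two smaller corrections. First, the paper treats InnerStash and InnerQueue separately: the stash (cyclic excess) satisfies the stronger bound $m^{-\Omega(s)}$, and the $\sqrt{s}$ bottleneck comes solely from the queue analysis. Lumping them together is harmless for the final statement but obscures where the loss occurs. Second, the time windows in the induction have length $m^{1/3}$ (the sets $\hat S_i$), not $m^{1/6}$; the $m^{1/6}$ is only the background-processing period, and the argument needs the longer window so that enough background moves accumulate to flush the ``effective but not real'' residue.
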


Theorem \ref{thm:fastinsertsinner} states that with overwhelming probability, all insertions performed on the
auxiliary structure $Q$ will run in $O(1)$ time, and all of the inner table's internal data structures will stay small (ensuring fast lookups), provided we never
try to put more than $m^{1/3}$ items in $Q$.
The following theorem addresses this size condition.

\begin{theorem} \label{thm:Qsmall}
Let $T$ be a cuckoo hashing scheme consisting
of two tables of size $m$. Let $m \geq (1+\epsilon)n$, for some constant $\epsilon>0$,
and let $Q$ be the queue/stash as described above.
For any polynomial $p(n)$ and any $n$-bounded sequence of operations $\pi$ of length $p(n)$, the number of items stored in 
$Q$ will never be more than $2\log^6 n$,
with probability at least $1-1/n^{\Omega(\log n)}$.
\end{theorem}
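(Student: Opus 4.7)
The plan is to decompose $Q$ into its OuterQueue and OuterStash portions and bound each by $\log^6 n$ with probability at least $1 - 1/n^{\Omega(\log n)}$, then combine via a union bound. Throughout the argument I would invoke Theorem~\ref{thm:fastinsertsinner} to work under the event that every operation on the inner auxiliary structure runs in constant time, so that each outer substep consumes exactly one slot from the front of OuterQueue and performs at most one eviction in $T$ before either placing the item, returning it to the front of OuterQueue, or routing it to OuterStash.

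For the OuterStash bound, I would adapt the stash analysis of Kirsch, Mitzenmacher, and Wieder~\cite{kmw-chs-09}. An item is routed to OuterStash only when a second cycle appears in its connected component of the cuckoo graph of $T$. Their per-component argument shows that the probability that a specified set of $s$ items is simultaneously trapped in this way is $n^{-\Omega(s)}$. The periodic reinsertion attempts (every $m^{1/4}$ outer adds we try to move the front of OuterStash back into $T$), together with the ordinary evictions triggered by subsequent insertions, guarantee that as soon as deletions break a component's double cycle the stashed items are extracted. A union bound over the at most $p(n)$ time steps gives $|\text{OuterStash}| \le \log^6 n$ at every time with probability at least $1 - p(n)\cdot n^{-\Omega(\log^6 n)}$, which is far smaller than the claimed failure probability.

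For the OuterQueue bound, I would use a drift/Chernoff argument. Let $L_i$ denote the length of the cuckoo-insertion chain required to fully place the $i$-th added item into $T$; the Pagh--Rodler analysis at load at most $1/(1+\epsilon)$ shows that $L_i$ has a constant mean $\mu$ and an exponentially decaying tail $\Pr[L_i \ge t] \le e^{-\Omega(t)}$, and this extends to $\poly\log(n)$-wise independent hash functions because the event $L_i \ge t$ depends on only $O(t)$ hash evaluations. Choose $\alpha > \mu$ so that the queue has strictly negative expected drift per add. Since the queue length at time $T$ is upper bounded by $\max_{s \le T}\bigl[(\text{arriving work in }[s,T]) - (\text{processed work in }[s,T])\bigr]$, the event $|\text{OuterQueue}| > \log^6 n$ forces the existence of a window of $w$ consecutive operations in which $\sum L_i$ exceeds its mean by at least $\log^6 n$. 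A moment generating function bound on the sum of the $L_i$ in a fixed window of length $w$ gives deviation probability at most $\exp\bigl(-\Omega(\min(t, t^2/w))\bigr)$ with $t = \log^6 n$. Taking $w = \log^{10} n$ yields $\min(t, t^2/w) = \Omega(\log^2 n)$, and hence a per-window failure probability of $\exp(-\Omega(\log^2 n)) = n^{-\Omega(\log n)}$; a union bound over the $p(n)$ possible window starting points only costs a polynomial factor. The periodic stash-extraction logic can only remove items from the system and so preserves the upper bound on $|\text{OuterQueue}|$.

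The main technical obstacle is justifying the exponential tail on $L_i$ in our setting, where the hash functions are only $\poly\log(n)$-wise independent, the contents of $T$ at insertion time depend on the full history of hash evaluations, and successive adds are weakly correlated through the chains still in flight. The saving grace is that cuckoo-graph arguments require only short-range randomness: a chain of length $\ell$ touches only $O(\ell)$ hash evaluations, so limited independence suffices for all chain lengths up to $\poly\log(n)$, which is precisely the regime needed to drive tail probabilities below $n^{-\Omega(\log n)}$. The in-flight correlations are local (each substep uses a constant number of hash evaluations), and I would absorb them into constants by coupling each add with an idealized insertion into a fresh snapshot of the current table contents.
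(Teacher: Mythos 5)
Your decomposition into OuterQueue and OuterStash and your treatment of the stash via a Kirsch--Mitzenmacher--Wieder extension broadly match the paper (Lemmata~\ref{smallprimarystashlemma} and~\ref{finalouterstashlemma}). The gap is in the OuterQueue argument, and it is exactly the gap the paper warns about.

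You model the chain length $L_i$ of the $i$-th insertion as a sub-exponential random variable with constant mean, and then apply a moment-generating-function (Bernstein-type) bound to $\sum_{i}L_i$ over a window, getting deviation $\exp(-\Omega(\min(t,t^2/w)))$. That bound requires the $L_i$ to be independent (or at least have controlled joint MGF), and they are not: all of the $L_i$ in a window are deterministic functions of the \emph{same} hash pair $(h_0,h_1)$ and hence of the \emph{same} cuckoo graph $G$. If $G$ happens to contain one atypically large connected component, every item whose endpoints land in it simultaneously incurs a large $L_i$ --- this is a single rare event of the graph that inflates many $L_i$'s at once, which the product-of-MGFs step silently rules out. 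You flag a correlation obstacle, but attribute it to ``in-flight'' chains and propose coupling each add to a fresh table snapshot. That coupling does nothing to remove the shared dependence on $h_0,h_1$: the fresh snapshots are still built from the same hash functions, so the $L_i$'s remain coupled through $G$. As a symptom, your claimed window bound would, if correct, already prove Lemma~4.4 of Arbitman et al.\ for $N$ up to $\mathrm{poly}\log n$, whereas the paper observes that their technique genuinely fails beyond $N=O(\log n)$ and a new method is required.

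The paper's fix (Lemma~\ref{queuelemma}, via Lemma~\ref{lemma1}) is a two-stage conditioning that your outline is missing. Stage one: show that with probability $1-1/n^{\Omega(\log n)}$ over $(h_0,h_1)$, the \emph{empirical} component-size distribution of $G$ is close to its expectation --- concretely, $X_i$ (the count of vertices in components of size $i$) concentrates around $\mu_i$ for small $i$, and the mass in components of size $\geq i^*$ is $O(n^{2/3})$ --- proved by Azuma's inequality on an edge-exposure martingale with a truncation (the ``typical'' martingale trick, using Lemma~\ref{lem:propertytwo} to cap differences at $O(\log^2 n)$). This yields that $\E_{v\sim\mathrm{Unif}(V)}[C_{S,h_0,h_1}(v)]=O(1)$ for the \emph{fixed}, realized graph. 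Stage two: conditioned on $G$ being in this good set, the $N$ vertices hit by the window are uniformly random and their component sizes are (conditionally) independent and bounded by $O(\log^2 n)$, so a bounded-difference concentration bound gives $\sum_i C(v_i) \leq cN$ with the claimed failure probability. Only after this decoupling is a Chernoff-style step legitimate. Without the stage-one graph conditioning your window argument does not go through.
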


\subsection{Notation and Preliminary Lemmata}
A standard tool in the analysis of cuckoo hashing is the \emph{cuckoo graph} $G$. Specifically, given a cuckoo hash table $T$,
subdivided into two tables $T_0$ and $T_1$, each with $m$ cells, and hash functions $h_0$ and $h_1$, the cuckoo graph
for $T$ is a bipartite multigraph (without self-loops), in which left vertices represent the table cells in $T_0$ and right vertices represent the cells
in $T_1$. Each key $x$ inserted into the hash table corresponds to an edge connecting 
$T_0[h_0(x)]$ to $T_1[h_1(x)]$. Thus, if $S$ denotes the set of $n$ items in $T$, and each of $T_0$ and $T_1$ have $m$ cells, the cuckoo graph $G(S, h_0, h_1)$ contains $2m$ nodes ($m$ on each side) and $n$ edges.
Given a cuckoo graph $G(S, h_0, h_1)$, we use $C_{S, h_0, h_1}(v)$ to denote the number of edges
in the connected component that contains $v$.

In our analysis, there will actually be \emph{two} cuckoo graphs $G$ and $G'$. $G$ corresponds to the outer table, and has vertex set $[m] \times [m]$ and at most $n$ edges
at any point in time.
$G'$ corresponds to the inner table and has vertex set $[m']\times[m']$ for $m'=m^{2/3}$, and (we will show) $n'\leq m^{1/3}$ edges with overwhelming probability at any point in time. 


We will make frequent use of the following lemmata.

\begin{lemma} \label{kmwlemma2}
Let $G(S, h_0, h_1)$ be a cuckoo graph with $n$ edges and vertex set $[m] \times [m]$. Then
\begin{equation*} \Pr(C_{S, h_0, h_1}(v) \geq k) \leq \left (\frac{nk}{m} \right )^k \frac{1}{k!}, \end{equation*}
where the probability is over the choice of $h_0$ and $h_1$.
\end{lemma}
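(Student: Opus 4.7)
The plan is a first-moment argument on rooted subtree witnesses.

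First, I would observe that the event $C_{S,h_0,h_1}(v)\ge k$ implies the existence of a size-$k$ witness subgraph: starting from $\{v\}$ and greedily attaching edges of the component incident to the current subgraph, after $k$ steps one obtains a connected subgraph with $k$ edges and at most $k+1$ vertices containing $v$. Let $N_k$ denote the number of such witness subgraphs. By Markov's inequality, $\Pr[C_{S,h_0,h_1}(v)\ge k]\le \E[N_k]$, so it suffices to upper bound $\E[N_k]$ by $(nk/m)^k/k!$.

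Next, I would enumerate witnesses by linearity. A witness subgraph is determined by a $k$-subset of keys $S'\subseteq S$ together with the graph structure they form in the cuckoo graph; restricting attention to the dominant tree-shaped witnesses (cycle-closing edges require both hash coordinates to match, contributing strictly lower-order terms per step and hence absorbable into the tree contribution), a witness is parametrized by a triple $(S',\tau,\phi)$, where $\tau$ is a labeled rooted tree shape on $k+1$ abstract vertices with root $v$, and $\phi$ is an embedding of the non-root abstract vertices into cuckoo cells compatible with the bipartition. For any fixed such triple, the probability that the random hash functions realize it is exactly $m^{-2k}$, since each of the $2k$ hash coordinates $h_0(x_i), h_1(x_i)$ is pinned.

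Third, I would count the witnesses combinatorially: $\binom{n}{k}$ choices of key subset, $k!$ assignments of keys to the $k$ tree edges, and, via Cayley's formula sharpened by the bipartite Scoins formula $\ell^{r-1}r^{\ell-1}$ summed over $\ell+r=k+1$, at most $k^k/(k+1)\cdot m^k$ combined tree-shape-and-embedding options. Multiplying,
\[
\E[N_k] \;\le\; \binom{n}{k}\cdot k!\cdot \frac{k^k}{k+1}\cdot m^k \cdot m^{-2k} \;\le\; \frac{n^k k^k}{k!\,m^k} \;=\; \left(\frac{nk}{m}\right)^{k}\frac{1}{k!},
\]
using $\binom{n}{k}\cdot k!\le n^k$. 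Markov's inequality then yields the lemma.

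The main technical obstacle is the tree-shape-plus-embedding count in the third step: a crude Cayley-based upper bound $(k+1)^{k-1}\cdot m^k$ is too loose by roughly a factor of $(k+1)!$, and to reach the stated constant one must employ the bipartite Scoins refinement and sum carefully over bipartitions $\ell+r=k+1$. A cleaner analytic alternative would be to interpret the BFS exploration of the cuckoo graph from $v$ as a branching process with $\mathrm{Bin}(n,1/m)$ offspring distribution and invoke a Borel-type tail bound on total progeny, which delivers the same estimate more directly; absorbing the cycle-closing edges at the end is then a routine perturbation.
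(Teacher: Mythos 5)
Your primary argument (first moment over rooted tree witnesses) contains a genuine combinatorial gap, and the specific count you assert in step three is false. The quantity
\[
\sum_{\ell+r=k+1}\binom{k}{\ell-1}\,\ell^{\,r-1}r^{\,\ell-1}
\]
is not a \emph{sharpening} of Cayley's formula: every tree on $k+1$ vertices rooted at a fixed vertex $v$ has a canonical bipartition with $v$ in one class, so summing Scoins over all bipartitions simply reconstitutes the Cayley count $(k+1)^{k-1}$. It does not reduce it. Already at $k=2$ the sum equals $3$, not $4/3$, and in general $(k+1)^{k-1}\approx e\,k^{k-1}$ is larger than your claimed $k^{k}/(k+1)$ by roughly a factor of $(k-1)!$. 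Plugging the true count into your chain gives $\E[N_k]\lesssim n^{k}(k+1)^{k-1}/m^{k}$, which exceeds the target $(nk/m)^{k}/k!$ by that same $\Theta\bigl((k-1)!\bigr)$ factor; moreover the line $\binom{n}{k}\,k!\,\frac{k^k}{k+1}\le\frac{n^k k^k}{k!}$ in your display is arithmetically wrong for $k\ge 3$ even granting your count. The overcounting is not a bookkeeping slip you can repair: when the component of $v$ is large, the number of $k$-edge subtrees through $v$ is enormous, so $N_k$ itself blows up, and Markov on $N_k$ is intrinsically loose by this factorial amount. No refinement of the witness enumeration closes this gap.

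Your concluding remark about the ``cleaner analytic alternative'' is, in fact, the paper's actual proof: it invokes a standard stochastic-domination (branching-process) argument from Devroye and Morin to show $\Pr(C_{S,h_0,h_1}(v)\ge k)\le\Pr(\mathrm{Bin}(nk,1/m)\ge k)$, then bounds the binomial tail by $\binom{nk}{k}m^{-k}\le (nk)^k/(k!\,m^k)$. That one-line domination avoids the multiplicity blow-up precisely because the BFS exploration reveals each edge of the component once, rather than summing over all $k$-edge subtrees through $v$. You should lead with that argument rather than relegate it to a throwaway sentence, and drop the witness-counting route.
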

\begin{proof}
By standard arguments, (e.g. \cite[Lemma 1]{DM03}), $\Pr(C_{S, h_0, h_1}(v) \geq k) \leq \Pr(\mbox{Bin}(nk,1/m) \geq k).$
The lemma follows by an easy calculation.
\end{proof}

We now present some basic facts about the stash.  The first relates the time it takes to determine
whether an element needs to be put in the stash on an insertion.

\begin{lemma} \label{moveslemma} \cite[Claim 5.4]{yuriy_thesis} For any element $x$ and any cuckoo graph $G(S, h_0, h_1)$, 
the number of moves required before $x$ is either placed in the stash or inserted into the cuckoo table is at most $2C_{S, h_0, h_1}(x)$.
\end{lemma}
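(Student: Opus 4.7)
The plan is to interpret the insertion procedure as a walk along edges in the cuckoo graph $G(S, h_0, h_1)$, confined to the connected component of the edge corresponding to $x$, and to charge each ``move'' to a traversal of an edge in that component.

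First I would make the walk-interpretation precise. When $x$ is inserted it is placed into $T_0[h_0(x)]$; if that cell is occupied by some item $y$, then $y$ is evicted and attempts to occupy its alternate cell $T_1[h_1(y)]$, possibly evicting another item $z$, and so on. Each ``move'' in the statement corresponds to one such eviction/placement, i.e.\ to traversing exactly one edge of the cuckoo graph. Since every edge reachable by this process must be incident to a vertex already visited, the entire walk stays inside the connected component of the edge for $x$, whose size is $C_{S,h_0,h_1}(x)$.

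Next I would analyze the cycle-detection mechanism and bound the number of edge traversals. In the standard de-amortized cuckoo insertion (cf.\ the cycle-detection scheme referenced earlier in the paper), the procedure walks outward from one endpoint of $x$ until it either reaches a free cell or closes a cycle; if a cycle is closed, it reverses direction and walks outward from the other endpoint of $x$ until it either succeeds or closes a second cycle, at which point $x$ is placed in the stash. In either phase, each edge in the connected component is crossed at most once, because once an edge has been used to evict an item it is never used again inside the same phase. Thus across the two phases every edge is crossed at most twice, giving at most $2 C_{S,h_0,h_1}(x)$ moves in total.

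The main obstacle is being careful about the second phase and the possibility of multiple cycles in the component. In particular, one has to verify that when the walk reverses after the first cycle, retracing part of the first-phase path and then venturing into the other ``half'' of the component, no edge is traversed a third time before the stash condition is triggered. I would handle this by noting that each edge is oriented by its current occupant's ``alternate location'' pointer, so the two allowed traversals correspond to the two possible orientations; once both have been used the walk is forced to revisit a vertex in a way that constitutes a second cycle, terminating the insertion. Since this is precisely the argument of Claim~5.4 of~\cite{yuriy_thesis}, I would conclude by citing that bookkeeping rather than reproducing it in full. \qed
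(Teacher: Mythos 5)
The paper does not prove this lemma at all --- it imports it verbatim as \cite[Claim 5.4]{yuriy\_thesis}, so there is no ``paper proof'' to compare against line by line. Your sketch is the standard bookkeeping for cuckoo insertion walks and is essentially the argument from the cited thesis (which you yourself acknowledge). The key idea --- that the eviction walk stays inside the connected component of $x$, that each of the $C_{S,h_0,h_1}(x)$ edges can be traversed in at most two orientations before a second cycle is forced and the element is stashed, hence at most $2C_{S,h_0,h_1}(x)$ moves --- is correct.

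One small imprecision worth flagging: your intermediate claim that ``each edge in the connected component is crossed at most once [per phase]'' is not literally right for the standard cuckoo eviction process without early cycle detection, where the tail of a rho-shaped path is traversed twice before $x$ itself is pushed out of $T_0[h_0(x)]$. The clean invariant is the one you state in your last paragraph: each edge has two allowed traversal orientations and once both are spent a second cycle has necessarily been closed, at which point $x$ is stashed; this orientation count, not a phase count, is what gives $2C$. You correctly fall back on that formulation and on the cited Claim~5.4, so the proposal is sound, but it would read more convincingly if you led with the orientation argument rather than the phase argument.
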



\cite[Lemma 2.2]{kmw-chs-09} shows that the number of keys that must be stored in the stash
corresponds to the quantity $\bar{e}(G(S, h_0, h_1))$, defined as follows.  For a
connected component $H$ of $G(S, h_0, h_1)$, define the \emph{excess} $e(H) := \#\mbox{edges}(H) -
\#\mbox{nodes}(H)$, and define
$$\bar{e}(G(S, h_0, h_1)) := \sum_H \max(e(H),0),$$
where the sum is over all connected components $H$ of $G(S, h_0, h_1)$.  

Given a vertex in this random graph, recall that $C_{S, h_0, h_1}(v)$ denotes the number of edges
in the connected component that contains $v$, and let $B_v$ be the
number of edges in the component of $v$ that need to be removed
to make it acyclic ($B_v$ is also called the \emph{cyclotomic number} of the component containing $v$).  
Notice that $B_v=e(H)+1$, that is, $B_v$ is 1 more
than the number of keys from $v$'s component that need to be placed in the stash.

We use the following lemma from \cite{kmw-chs-09}.
\begin{lemma} \label{kmwlemma1} 
Let $|S|=n$ with $(1+\epsilon)n \leq m$ for some constant $\epsilon$, and let $G(S, h_0, h_1)$ be a cuckoo graph with vertex set $[m]\times[m]$. Then
$$\Pr(B_v \geq t~| C_{S, h_0, h_1}(v) = k) \leq \left ( \frac{3e^5k^3}{m} \right )^t.$$ \end{lemma}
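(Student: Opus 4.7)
The plan is to reduce the tail bound for $B_v$ to a combinatorial counting argument on connected subgraphs of the cuckoo graph. Since any connected component $H$ satisfies $B_v = |E(H)| - |V(H)| + 1$ (this is just the definition of the cyclotomic number for a connected graph), conditioning on $C_v = |E(H)| = k$ makes the event $B_v \geq t$ equivalent to $|V(H)| \leq k - t + 1$. So the task reduces to bounding the probability that $v$'s component has exactly $k$ edges but at most $k-t+1$ vertices, i.e., that its edge-set is unusually ``dense'' among a small set of vertices containing $v$.

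I would then apply a direct union bound over candidate structures. For each $\ell$ with $1 \leq \ell \leq k-t+1$, each candidate vertex set $W \subseteq [m] \times [m]$ of size $\ell$ containing $v$, and each choice $S'$ of $k$ keys from $S$, I bound the probability that all $k$ chosen keys have both hash values inside $W$ by $(\ell/m)^{2k}$, using independence and uniformity of $h_0$ and $h_1$. Multiplying by $\binom{2m-1}{\ell-1}$ (choices of $W$) and $\binom{n}{k}$ (choices of $S'$), and summing over $\ell$, yields
\[
\Pr(B_v \geq t,\ C_v = k) \;\leq\; \sum_{\ell=1}^{k-t+1} \binom{2m-1}{\ell-1}\binom{n}{k}\left(\frac{\ell}{m}\right)^{2k}.
\]

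The next step is to simplify using $\binom{n}{k} \leq (en/k)^k$, $\binom{2m-1}{\ell-1} \leq (2em/(\ell-1))^{\ell-1}$, together with $n/m \leq 1/(1+\epsilon) < 1$. A direct computation shows the summand is geometrically increasing in $\ell$, so the sum is dominated up to a constant factor by its last term at $\ell = k-t+1$, which simplifies to $(3e^5 k^3/m)^t$ once the exponents of $m$, $k$, and $\ell$ are collected. The joint bound converts to the claimed conditional bound by dividing by $\Pr(C_v = k)$, which is $\Omega(1)$ in the nontrivial regime, contributing at most a constant that is absorbed into the $3e^5$.

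The main obstacle will be the careful arithmetic bookkeeping in the last simplification: showing that exactly $t$ net powers of $m^{-1}$ survive the cancellation between $(\ell/m)^{2k}$ in the numerator and $(2em/(\ell-1))^{\ell-1}$ from the binomial estimate, and that the coefficient emerges as $k^{3t}$ rather than some larger polynomial in $k$. Intuitively, each ``excess'' edge beyond a spanning tree of the component contributes one independent hash-collision event of probability $O(k^2/m)$, while the choice of which of the $k$ keys plays the role of the excess edge contributes an additional factor of $k$ per unit of $t$; making this precise is the whole technical content of the lemma.
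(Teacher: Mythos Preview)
The paper does not prove this lemma; it simply quotes it from \cite{kmw-chs-09} (Kirsch--Mitzenmacher--Wieder). So there is no ``paper's proof'' to compare against here, but your proposal has a real gap that is worth pointing out.

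Your plan is to bound the \emph{joint} probability $\Pr(B_v\ge t,\ C_v=k)$ by a union bound over vertex sets $W\ni v$ of size $\ell\le k-t+1$ and $k$-subsets $S'\subseteq S$, and then divide by $\Pr(C_v=k)$. Two things go wrong. First, the union bound you write down is too crude to yield $(3e^5k^3/m)^t$: if you actually carry out the arithmetic at the dominant term $\ell=k-t+1$, using $\binom{2m-1}{\ell-1}\le(2em/(\ell-1))^{\ell-1}$, $\binom{n}{k}\le(en/k)^k$ and $n\le m$, the powers of $m$ cancel down to $m^{-t}$ as you say, but the leftover combinatorial factor is of order $(2e^2)^k\,(k/m)^t$, not $(\text{poly}(k)/m)^t$. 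The exponential-in-$k$ blowup comes from the fact that your union bound does not use the event $C_v=k$ at all --- it only uses that \emph{some} $k$ keys land in \emph{some} $\ell$-set, which is a vastly larger event than ``$v$'s component has exactly $k$ edges''.

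Second, and relatedly, the step ``divide by $\Pr(C_v=k)$, which is $\Omega(1)$'' is simply false: $\Pr(C_v=k)$ decays like $\beta^k$ for some $\beta\in(0,1)$ (this is exactly the content of Lemma~\ref{kmwlemma2} and the remark after Lemma~\ref{lem:propertytwo}). So dividing a bound that already blows up like $(2e^2)^k$ by a quantity that shrinks like $\beta^k$ makes matters worse, not better. The actual proof in \cite{kmw-chs-09} works genuinely conditionally: one fixes which $k$ keys form the component and counts, among all hash-value assignments that make those $k$ edges a connected graph through $v$, the fraction with cyclotomic number at least $t$. That conditional counting is where the $k^{3t}/m^t$ emerges, and it cannot be recovered from a joint-probability union bound of the type you propose.
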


\subsection{Proving Theorem \ref{thm:fastinsertsinner}}
\label{sec:proof1}
Throughout this section, we use the following notation, following that in \cite{arbitman2009amortized}. Let $\pi'$ be an $m^{1/3}$-bounded sequence of $p(n)$ operations.
Denote by $(x_1, \dots, x_{p(n)})$ the elements inserted by $\pi'$. For any integer $0< i \leq p(n)-m^{1/3}$,  let $S_i$  denote the set of elements that are
stored in the data structure just before the insertion of $x_i$, and let $\hat{S_i}$ denote $S_i$ together with the elements $\{x_i, x_{i+1}, \dots, x_{i+m^{1/3}}\}$, ignoring any deletions between time $i$ and time $i+m^{1/3}$. Since $\pi'$ is an 
$m^{1/3}$ bounded sequence, we have $|S_i| \leq m^{1/3}$ and $|\hat{S}_i| \leq 2m^{1/3}$ for all $i$. 

We need the following lemma, which states that all components of $G'$ are tiny.

\begin{lemma} \label{lem:G'compsize} Let $|S|=n'$, and let $G'(S, f_0, f_1)$ be any cuckoo graph with vertex set $[m']\times[m']$. Assume $n'/m' \leq 2m^{1/3}$. 
Then for any $k \leq m^{1/4}$, and for any node $v$, $\Pr[C_{S, f_0, f_1}(v)  \geq k] \leq m^{-\Omega(k)}$.\end{lemma}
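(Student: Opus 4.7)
The plan is to prove Lemma~\ref{lem:G'compsize} by a direct application of Lemma~\ref{kmwlemma2} to the inner cuckoo graph $G'$, exploiting the fact that the inner table is dramatically sparser than the outer table. Specifically, the role of $m$ in Lemma~\ref{kmwlemma2} will be played by $m' = m^{2/3}$, while the number of edges $n'$ will be controlled by the hypothesis $n' \leq 2m^{1/3}$ (which I read as the intended sparsity condition: the inner table holds at most $2m^{1/3}$ items because $\pi'$ is $m^{1/3}$-bounded and we also account for the lookahead set $\hat S_i$). The resulting load factor satisfies $n'/m' \leq 2m^{-1/3}$, so each factor of $n'k/m'$ is tiny whenever $k$ is not too large.

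First, I would instantiate Lemma~\ref{kmwlemma2} to obtain
\[
\Pr[C_{S,f_0,f_1}(v) \geq k] \;\leq\; \left(\frac{n'k}{m'}\right)^{k}\frac{1}{k!}
\;\leq\; \left(\frac{2m^{1/3}\,k}{m^{2/3}}\right)^{k}\frac{1}{k!}
\;=\; \left(\frac{2k}{m^{1/3}}\right)^{k}\frac{1}{k!}.
\]
Next, I would plug in the hypothesis $k \leq m^{1/4}$, which gives $2k/m^{1/3} \leq 2m^{1/4-1/3} = 2m^{-1/12}$. Therefore
\[
\Pr[C_{S,f_0,f_1}(v) \geq k] \;\leq\; \bigl(2m^{-1/12}\bigr)^{k}\frac{1}{k!} \;\leq\; m^{-k/13}
\]
for all sufficiently large $m$ (absorbing the $2^k$ and $1/k!$ factors into the constant in the exponent), which is exactly the desired $m^{-\Omega(k)}$ bound.

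There is essentially no serious obstacle here; the proof is a short calculation that simply pushes the generic component-size bound of Lemma~\ref{kmwlemma2} through the specific parameter choices of the inner table. The only mild subtlety is confirming the sparsity bound $n' \leq 2m^{1/3}$ used in the estimate: this is precisely the regime in which Theorem~\ref{thm:fastinsertsinner} is applied, where the size of the set stored in $C$ together with the $m^{1/3}$ pending insertions is at most $2m^{1/3}$. Once that is fixed, the chain $n'/m' \leq 2m^{-1/3}$ combined with $k \leq m^{1/4}$ makes the base of the exponential strictly negative-power in $m$, which yields the lemma.
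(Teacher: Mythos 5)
Your proof is correct and takes essentially the same approach as the paper: instantiate Lemma~\ref{kmwlemma2} with the inner table parameters to get $\bigl(2k/m^{1/3}\bigr)^k/k!$ and then observe that for $k \leq m^{1/4}$ the base is $O(m^{-1/12})$, giving the $m^{-\Omega(k)}$ bound. You also correctly read the hypothesis ``$n'/m' \leq 2m^{1/3}$'' as a typo for $n' \leq 2m^{1/3}$, which is the condition actually used in the paper's own calculation.
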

\begin{proof} Lemma \ref{kmwlemma2} implies that
$$\Pr(C_{S, f_0, f_1}(v) \geq k) \leq \left (\frac{2k}{m^{1/3}} \right )^k \frac{1}{k!}.$$
For $k \leq m^{1/4}$ the probability that $C_{S, f_0, f_1}(v)  \geq k$ is at most
$m^{-\Omega(k)}$; applying a union-bound over all $v$, the probability
some connected component has size greater than $k$ is still at most $m^{-\Omega(k)}$. 
\end{proof}

\subsubsection{Showing \text{InnerStash} Stays Small}
\label{Cstashsec}
We begin by showing that \text{InnerStash} has size at most $s$ at all points in time with probability at least $1-1/m^{\Omega(s)}.$ 
This requires extending the analysis of prior works \cite{kmw-chs-09, kutzelnigg} to superconstant sized stashes, while
leveraging the sparsity of the inner cuckoo table $C$. 


\begin{lemma} \label{lem:innerstash} Let $|S|=n'$, and let $G'(S, f_0, f_1)$ be any cuckoo graph vertex set $[m'] \times [m']$. Assume $n' \leq 2m^{1/3}$ and $m' =m^{2/3}$. 
For any $s< m^{1/6}$, 
with probability $1 - 1/m^{\Omega(s)}$
the total number of vertices that reside in the stash of $G'(S, f_0, f_1)$ is at most $s$. 
\end{lemma}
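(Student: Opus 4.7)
The plan is to bound $\bar e(G')$---which, by Lemma~2.2 of \cite{kmw-chs-09}, equals the minimum number of items that must reside in the stash---via a union bound over ``witness subgraphs.'' The starting observation is that if $\bar e(G')\ge s$, then taking $W:=\bigcup_{H\,:\,e(H)\ge 1}H$ yields a subgraph of $G'$ with $|E(W)|-|V(W)|\ge s$, by additivity of $|E|-|V|$ across connected components (trees and unicyclic components contribute $0$ under the $\max(\cdot,0)$ in the definition of $\bar e$). Hence it suffices to bound the probability that $G'$ contains \emph{any} subgraph on some $k\ge 2$ vertices with exactly $k+s$ edges.

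I would enumerate such a subgraph by choosing $k_L$ left-vertices and $k_R=k-k_L$ right-vertices ($\binom{2m'}{k}$ choices in total), then selecting which $k+s$ of the $n'$ edges lie inside it. Each such edge lands in the chosen vertex set independently with probability at most $k^2/(4(m')^2)$ (maximized when $k_L=k_R=k/2$). Applying $\binom{a}{b}\le(ea/b)^b$ together with the crucial sparsity bound $n'\le 2(m')^{1/2}=2m^{1/3}$ (from the $m^{1/3}$-bounded hypothesis), the $k$-th contribution collapses to
\begin{equation*}
\left(\tfrac{2em'}{k}\right)^{\!k}\!\left(\tfrac{ek^2}{2(k+s)(m')^{3/2}}\right)^{\!k+s} \;\le\; C^{k+s}\,\frac{k^{k+2s}}{(k+s)^{k+s}\,(m')^{k/2+3s/2}}
\end{equation*}
for an absolute constant $C$.

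The main obstacle is showing that the sum of these bounds over $k$ is $m^{-\Omega(s)}$. The dominant term is $k=2$ (a pair of vertices joined by $s+2$ parallel edges), which evaluates to $O(m^{-2/3-s})$ and already achieves the claimed rate. For larger $k$ the factor $(m')^{-k/2}$ decays faster than the combinatorial and $k^{2s}$ multipliers can grow: I would split the sum at $k=\Theta(\log m)$, bound the small-$k$ range term by term using the uniform estimate $(k^2/(m')^{3/2})^s\le m^{-s+o(s)}$ (valid for $s<m^{1/6}$), and absorb the tail $k>\Theta(\log m)$ via the geometric factor $(k/(m')^{1/2})^k$, truncated at $k\le n'-s=O((m')^{1/2})$. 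Summing yields $\Pr[\bar e(G')\ge s]\le m^{-\Omega(s)}$, with the $O((m')^{1/2})$ number of terms absorbed into the exponent. A cleaner alternative would be to first condition on every connected component of $G'$ having size $O(\log m)$ via Lemma~\ref{lem:G'compsize} and then apply Lemma~\ref{kmwlemma1} to each component's excess, but the direct witness-subgraph bound above avoids having to control correlations across components.
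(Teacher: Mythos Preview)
Your argument is correct. The witness-subgraph union bound you carry out is sound: the reduction from $\bar e(G')\ge s$ to the existence of a $k$-vertex subgraph with $k+s$ edges is valid, the counting $\binom{2m'}{k}\binom{n'}{k+s}(k^2/4(m')^2)^{k+s}$ is right, and after inserting $n'\le 2(m')^{1/2}$ one can in fact bound the whole sum uniformly (indeed, using $k^{k+2s}/(k+s)^{k+s}\le k^s$ and $k\le 2(m')^{1/2}$, each term is at most $(e^2(m')^{-1/2})^k\cdot (e(m')^{-1})^s$, so the sum over $k$ is geometric and the total is $m^{-\Omega(s)}$; your small-$k$/large-$k$ split is not even needed).

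However, your route is genuinely different from the paper's. The paper does \emph{not} argue directly about dense subgraphs; it simply notes that Lemma~\ref{lem:innerstash} is a special case of Lemma~\ref{smallprimarystashlemma} (the outer-stash lemma), whose proof proceeds by first bounding the per-vertex cyclotomic number via Lemmata~\ref{kmwlemma2} and~\ref{kmwlemma1} to get $\Pr(B_v\ge j+1)\le m^{-1-\alpha j}$, and then invoking the ``$2m$ independently chosen components'' decomposition of \cite[Theorem~2.2]{kmw-chs-09} to aggregate the excesses. That argument only needs load bounded away from $1$, so it works verbatim for both the inner and the outer table---which is exactly why the paper can defer the proof. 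Your approach, by contrast, relies essentially on the extreme sparsity $n'=O((m')^{1/2})$ of the inner table (this is what makes the $(m')^{-k/2}$ factor appear and drive the geometric decay); it would not go through for the outer table, where $n\approx m/(1+\epsilon)$ and $k$ can be $\Theta(m)$. So you gain elementarity (no need for Lemma~\ref{kmwlemma1} or the independent-components machinery) at the cost of reusability. Your closing ``cleaner alternative'' is in fact much closer in spirit to what the paper actually does.
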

The proof is entirely the same as that for the outer stash, which we prove in Lemma~\ref{smallprimarystashlemma}.
We defer the proof until then.  

We are in a position to show formally that the inner stash is small at  all points in time.

\begin{lemma} \label{finalinnerstashlemma} Let $\pi'$ be any $m^{1/3}$-bounded sequence of operations of length $p(n)$ on the inner cuckoo table $C$. For any $s<m^{1/12}$, with probability
$1-p(n)/m^{\Omega(s)}$ over the choice of hash functions $f_0, f_1$, \text{InnerStash} has size at most $s$ at all points in time.
\end{lemma}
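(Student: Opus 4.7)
The plan is to lift the static bound of Lemma~\ref{lem:innerstash} to a dynamic bound that holds at every point in time during the length-$p(n)$ sequence $\pi'$ via a union bound, and then relate the algorithmic $\text{InnerStash}$ to the graph-theoretic excess $\bar{e}(G'(S_t, f_0, f_1))$. Since Lemma~\ref{lem:innerstash} already gives a static bound with the right shape, the scaffolding is relatively clean; the delicate part is handling de-amortization.

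First, I would index time by the steps of $\pi'$ and, for each $i$, consider the set $\hat{S}_i$ defined in Section~\ref{sec:proof1}, which satisfies $|\hat{S}_i| \le 2m^{1/3}$ and contains every element present in $C$ at any time in the window $[i, i+m^{1/3}]$. Applying Lemma~\ref{lem:innerstash} with parameter $s$ to each $\hat{S}_i$ yields $\bar{e}(G'(\hat{S}_i, f_0, f_1)) \le s/2$ with probability $1 - 1/m^{\Omega(s)}$. Since the vertex set $[m'] \times [m']$ of $G'$ is static and $S_t \subseteq \hat{S}_i$ for every $t$ in the $i$-th window, the excess is monotone in the edge set, so $\bar{e}(G'(S_t, f_0, f_1)) \le \bar{e}(G'(\hat{S}_i, f_0, f_1))$. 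A union bound over the $O(p(n)/m^{1/3})$ relevant windows then shows that $\bar{e}(G'(S_t, f_0, f_1)) \le s/2$ for all $t$, with probability at least $1 - p(n)/m^{\Omega(s)}$.

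Next, I would bound the algorithmic size of $\text{InnerStash}$ at any time $t$ by the structural quantity $\bar{e}(G'(S_t, f_0, f_1))$, plus a term accounting for items that are still ``in transit'' in the de-amortized implementation. By the insertion rule, an item enters $\text{InnerStash}$ only when a second cycle is detected, so at the moment it is stashed it must belong to a component of excess at least $1$. Conversely, the pulling mechanism (a move toward $\text{InnerStash}$ every $m^{1/6}$ inner operations, returning the probed element to the back of the deque on failure) ensures that any item whose component has since become acyclic is re-inserted: one pass through the entire stash takes at most $|\text{InnerStash}| \cdot m^{1/6} \le m^{1/4}$ operations given $|\text{InnerStash}| \le m^{1/12}$, which is much shorter than any window in which $\bar{e}$ could refill. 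Consequently, at any time $t$, $|\text{InnerStash}(t)| \le \bar{e}(G'(S_t, f_0, f_1)) + s/2 \le s$, using the first step.

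The main obstacle is this last step: cleanly formalizing the claim that ``stale'' stash occupants are cleared quickly enough that the true $|\text{InnerStash}(t)|$ never exceeds $\bar{e}$ of the current graph by more than $s/2$. The argument requires a careful potential-function or amortized-style accounting that charges each stash insertion either to an edge of positive excess in the current component of $G'(S_t, f_0, f_1)$ or to a scheduled pulling step that has not yet fired, and this is where the restriction $s < m^{1/12}$ (rather than the weaker $s < m^{1/6}$ permitted by Lemma~\ref{lem:innerstash}) is used: it guarantees the pulling rate dominates the stashing rate with room to spare. Combining the union bound from the first step with this structural bound yields the claimed probability $1 - p(n)/m^{\Omega(s)}$.
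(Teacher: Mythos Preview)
Your approach is essentially the paper's: apply Lemma~\ref{lem:innerstash} to each $\hat{S}_i$ and union-bound, then separate the ``real'' stash (bounded by $\bar{e}$, hence by $\xi_1$) from stale items lingering after deletions, and argue the pulling mechanism clears the latter. The paper formalizes the last step not via a potential function but by a direct induction over windows of length $m^{1/3}$: assuming the effective stash has size at most $s$ at time $j-m^{1/3}$, one counts $m^{1/3}/m^{1/6}=m^{1/6}\ge s^2$ total stash moves in the window and concludes every stale item present at the window's start is cleared by its end.

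The one ingredient you are missing is Lemma~\ref{lem:G'compsize}. Your ``one pass through the entire stash takes at most $|\text{InnerStash}|\cdot m^{1/6}$ operations'' undercounts: a single pull contributes one cuckoo move, and by Lemma~\ref{moveslemma} a stale item needs up to $2C_{S,f_0,f_1}(v)$ moves to be reinserted, not one. The paper closes this by invoking Lemma~\ref{lem:G'compsize} to bound every component of $G'(\hat{S}_j,f_0,f_1)$ by $s/2$ (with failure probability $p(n)/m^{\Omega(s)}$), so each stale item costs at most $s$ moves; since there are at most $s$ of them and $s^2\le m^{1/6}$ moves are available per window, the induction goes through. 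This is precisely where $s<m^{1/12}$ enters, as you suspected, but the mechanism is the component-size bound rather than a rate comparison.
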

\begin{proof}
Following the approach of \cite{arbitman2009amortized}, we define a \emph{good event} $\xi_1$ that ensures that \text{InnerStash} has size
at most $s$. Namely, let $\xi_1$ be the event that at all points $i$ in time, 
the number of vertices $v$ in the stash of $G'(\hat{S}_i, f_0, f_1)$ is at most $s$. By Lemma \ref{lem:innerstash}, 
and a union bound over all $p(n)$ operations in $\pi'$, $\xi_1$ occurs with probability at least $1-p(n)/m^{\Omega(s)}$.

We distinguish between the \emph{real} stash and the \emph{effective} stash. The real stash at any point $i$ in time is the set
of nodes that reside in the stash of $G'(S_i, f_0, f_1)$; the effective stash refers to the real stash, together with elements that used to be in the real stash
but have since had cycles removed from their components due to deletions, and have not yet been inserted in the inner cuckoo table. Let $E_i$
denote the set of items in the effective stash but not the real stash of $G'(S_i, f_0, f_1)$.

Clearly the event $\xi_1$ guarantees that at any point in time, the size of the real stash is at most $s$. To see that the size of the effective stash
never exceeds $s$, assume by way of induction that for $j \geq m^{1/3}$, the effective stash has size at most $s$ at all times less than $j$ (clearly event $\xi_1$ guarantees that this is true for all $j \leq m^{1/3}$ as a base case). In particular, the inductive hypothesis ensures that the effective stash has size at most $s$ at time $j-m^{1/3}$.
Observe that during the execution of operations $\{x_{j-m^{1/3}}, x_{j-m^{1/3}+1}, \dots, x_{j}\}$, we spend $\frac{m^{1/3}}{m^{1/6}}\geq s^2$ moves in total on the elements 
of the effective stash. By Lemma \ref{lem:G'compsize} and a union bound of all $i \leq p(n)$, we may assume all connected components of $G'(\hat{S}_j, f_0, f_1)$
have size at most $s/2$; this adds a failure probability of at most $p(n)/m^{\Omega(s)}$ to the result. Thus, by Lemma \ref{moveslemma} the insertion of any item $x$ in the effective stash never requires more than $s$ moves before it succeeds or causes $x$ to be returned to the back of the stash. 
It follows that by time $j$, we have spent at least $s$ moves on each of the items in $E_{j-m^{1/3}}$, and hence all elements in $E_{j-m^{1/3}}$ are inserted into the table by time $j$. Thus, at time $j$ the size of the effective stash is at most $s$, and this completes the induction.

\end{proof}

\subsubsection{Showing \text{InnerQueue} Stays Small}
\label{Cqueuesec}
The bulk of our analysis relies on the following technical lemma.

\begin{lemma} \label{lem:innerqueue} Let $|S|=n'$, and let $G'(S, f_0, f_1)$ be any cuckoo graph with vertex set $[m'] \times [m']$. Assume $n'/m' \leq 2m^{1/3}$. 
For any $0<s  \leq m^{1/6}$, 
with probability $1/m^{\Omega(\sqrt{s})}$
the total number of vertices  $v \in G'(S, f_0, f_1)$ with $C_{S, f_0, f_1}(v) > 8$ is at most $s$.
\end{lemma}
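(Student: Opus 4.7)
The plan is to combine a high-probability bound on the maximum component size (Lemma~\ref{lem:G'compsize}) with a factorial-moment argument on the number of connected components that contain many edges. Let $Y$ denote the number of vertices $v \in G'(S,f_0,f_1)$ with $C_{S,f_0,f_1}(v) > 8$, and let $N$ denote the number of connected components of $G'$ that contain at least $9$ edges. First I would apply Lemma~\ref{lem:G'compsize} with $k=\lceil\sqrt{s}\rceil$ to obtain an event $\mathcal{E}_1$, of probability at least $1-m^{-\Omega(\sqrt{s})}$, under which every connected component of $G'$ has at most $\sqrt{s}$ vertices.

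Under $\mathcal{E}_1$, each ``bad'' component (one with at least $9$ edges) contributes at most $\sqrt{s}$ vertices to $Y$, so $\{Y > s\}\cap \mathcal{E}_1$ forces $N\ge \sqrt{s}$. It therefore suffices to show $\Pr[N\ge \sqrt{s}] = m^{-\Omega(\sqrt{s})}$.

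To bound $\Pr[N\ge j]$ with $j=\lceil\sqrt{s}\rceil$, I would compute the $j$-th factorial moment $\E[N(N-1)\cdots(N-j+1)]$ as a sum over ordered $j$-tuples of pairwise edge-disjoint connected subgraphs $H_1,\ldots,H_j$ with $c_i := |E(H_i)|\ge 9$, of $\Pr[\bigcap_i H_i \subseteq G'(S,f_0,f_1)]$. Since each edge of $G'$ is determined by an independent pair of hashes of a distinct element of $S$, edge-disjoint $H_i$ must be realized by disjoint subsets of $S$, giving the joint probability bound $(n'/m'^2)^{\sum c_i} = (2/m)^{\sum c_i}$. Bounding the number of embeddings of each shape $H_i$ on vertex set $[m']\times[m']$ by $(2m')^{r_i}\cdot r_i^{O(c_i)}$ with $r_i \le c_i+1$ (following the counting arguments in~\cite{kmw-chs-09,kutzelnigg}), one then gets $\sum_{H \text{ bad}} (2/m)^{|E(H)|} = O(m^{-7/3})$, dominated by the tree-like $c=9$ case and matching the vertex-level bound from Lemma~\ref{kmwlemma2}. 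Factoring the sum over $j$-tuples yields $\E[N^{(j)}] \le (Cm^{-7/3})^j$ for an absolute constant $C$, so Markov gives $\Pr[N\ge j] \le (Cm^{-7/3})^j/j! = m^{-\Omega(\sqrt{s})}$ for $s \le m^{1/6}$, which combined with the $m^{-\Omega(\sqrt{s})}$ bound on $\neg\mathcal{E}_1$ proves the lemma.

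The main obstacle is the combinatorial bookkeeping in the factorial-moment estimate: the $O(m^{-7/3})$ per-component bound must ``multiply'' cleanly across the $j$ disjoint components without accumulating factors that would destroy the $\sqrt{s}$ exponent. Handling this carefully requires counting embeddings of each shape separately, using that edge-disjoint subgraphs consume disjoint subsets of $S$ (so the per-edge probability $1/m'^2$ factors), and summing out the size profile $(c_1,\ldots,c_j)$ so that the dominant contribution remains the $c_i = 9$ terms---all paralleling the counting arguments used for the constant-sized stash analysis in~\cite{kmw-chs-09}.
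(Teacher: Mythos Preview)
Your proposal is correct and reaches the same conclusion as the paper, but the second half of the argument takes a genuinely different route. Both proofs begin identically: invoke Lemma~\ref{lem:G'compsize} with $k=\lceil\sqrt{s}\rceil$ so that, on an event of probability $1-m^{-\Omega(\sqrt{s})}$, every component has at most $\sqrt{s}$ vertices, and hence $Y>s$ forces at least $\sqrt{s}$ ``bad'' components. The divergence is in how one bounds $\Pr[N\ge \sqrt{s}]$.

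You bound the $j$-th factorial moment of $N$ by summing over $j$-tuples of edge-disjoint connected subgraphs with $\ge 9$ edges, using that edge-disjoint subgraphs are realized by disjoint subsets of $S$ to factor the probability, and then doing an explicit subgraph count to get the per-component contribution $O(m^{-7/3})$. This works (the geometric decay in $c$ is governed by the ratio $O(1)\cdot m^{-1/3}$, and the non-tree excess contributes a convergent geometric factor since $c\le n'\le 2m^{1/3}$ keeps $c^2/m'$ bounded), but it requires the careful bookkeeping you flag as the main obstacle. The paper instead avoids subgraph counting entirely by a ``special vertex'' trick: it picks one random representative vertex per bad component, observes that finding $\sqrt{s}$ special vertices is necessary, and then shows that for any fixed set of $j$ vertices the probability all are special is at most $\Pr[C(v)\ge 8]^j$ via a conditioning argument (removing earlier large components only lowers the edge density). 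A union bound over $\binom{m'}{\sqrt{s}}$ vertex sets, combined directly with the vertex-level bound of Lemma~\ref{kmwlemma2}, then gives $m^{-\Omega(\sqrt{s})}$. The paper's route is shorter because it piggybacks on Lemma~\ref{kmwlemma2} rather than re-deriving a component-level first moment from scratch; your route is the more ``textbook'' moment method and would generalize more mechanically to other subgraph statistics.
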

\begin{proof}
Lemma \ref{lem:G'compsize} implies that
the probability that all connected components have size at most $\sqrt{s}$ is at least $1-m^{-\Omega(\sqrt{s})}$. 
We assume for the remainder that there are no components 
of size greater than $\sqrt{s}$; this adds at most a probability $1/m^{\Omega(\sqrt{s})}$ of failure in
the statement of the lemma.

Pick a
special vertex $v_i$ in each component $C_i$ of size greater than 8 at random, and assign
to $v_i$ a ``weight'' equal to the number of edges in $C_i$. We show with probability at least $1-m^{-\Omega(\sqrt{s})}$ 
we cannot find a set of weight $s$; the lemma follows.

Since no component has size more than $\sqrt{s}$, we would need to find
at least $j = s/\sqrt{s} = \sqrt{s}$
vertices that are in components of size greater than 8 and are
the special vertex for that component.
We use the fact that for any fixed set of $j$ distinct vertices $v_1 \dots v_j$,
the probability (over both the choice of $G'$ and the choice 
of the special vertex for each component) that all $j$ vertices are the special vertices for their component is upper bounded
by $\Pr[C_{S, f_0, f_1}(v)  \geq 8]^j$. Indeed, write $E_i$ for the event $v_1, \dots v_{i-1}$ are all special. We may write $$\Pr[v_1 \dots v_j \text{ are all special } ] \leq \prod_{i=1}^j \Pr[v_i \text{ is special } |E_i].$$

To bound the right hand side, notice that  
$$\Pr[v_i \text{ is special } | E_i] \leq \Pr[v_i \text{ is in a different component from } v_1, \dots v_{i-1} \text{ and } C_{S, f_0, f_1}(v)  \geq 8 | E_i ]$$
$$ \leq \Pr[C_{S, f_0, f_1}(v) \geq 8],$$
where the last inequality holds because because the density of edges after taking out the earlier
(larger than expected) components is less than the density of edges a priori. 

Thus, taking a union bound over every set $\{v_1, \dots, v_j \}$ of $j = \sqrt{s}$ vertices,
the probability that we can find such a set is at most
$${m' \choose \sqrt{s}} \left( \Pr[v_1, \dots v_j \text{ are all special } ]\right)
\leq {m' \choose \sqrt{s}} \left( \Pr[C_{S, f_0, f_1}(v) > 8]^{\sqrt{s}}\right)$$
$$\leq {m' \choose \sqrt{s}} \left( \frac{8^8}{8!m^{8/3}} \right )^{\sqrt{s}}
\leq  m^{-\Omega(\sqrt{s})}.$$
where the second inequality follows by Lemma \ref{kmwlemma2}.
\end{proof}

We are in a position to show formally that the inner queue is small at  all points in time.

\begin{lemma} \label{finalinnerqueuelemma} Let $\pi'$ be any $m^{1/3}$-bounded sequence of operations of length $p(n)$ on the inner cuckoo table $C$. For any $s  \leq m^{1/6}$, with probability
$1-p(n)/m^{\Omega(\sqrt{s})}$ over the choice of hash functions $f_0, f_1$, \text{InnerQueue} has size at most $s$ at all points in time.
\end{lemma}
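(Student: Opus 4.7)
The plan is to mirror the two-level structure of the proof of Lemma~\ref{finalinnerstashlemma}, replacing the ``the component contains a cycle'' condition with the ``the component has size greater than $8$'' condition, which by Lemma~\ref{moveslemma} characterizes exactly when a $16$-move cuckoo walk is insufficient to complete an insertion into the inner table $C$. In this substitution, Lemma~\ref{lem:innerqueue} plays the role that Lemma~\ref{lem:innerstash} did in the previous proof. Throughout I will write $Q_j$ for the contents of \text{InnerQueue} at time $j$.

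First I would define the good event $\xi_2$: at every time index $0\le i\le p(n)$, the number of vertices of $G'(\hat{S}_i, f_0, f_1)$ that lie in a connected component of size greater than $8$ is at most $s$. Since $|\hat{S}_i|\le 2m^{1/3}$ at every $i$, the hypothesis of Lemma~\ref{lem:innerqueue} is satisfied, so the lemma bounds the failure probability at each fixed $i$ by $m^{-\Omega(\sqrt{s})}$; a union bound over the $p(n)$ indices then gives $\Pr[\xi_2]\ge 1 - p(n)/m^{\Omega(\sqrt{s})}$. The remainder of the argument will be deterministic conditional on $\xi_2$.

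Next I would establish by induction on $j$ the invariant that every element residing in $Q_j$ is a vertex lying in a component of size greater than $8$ in $G'(\hat{S}_{\max(0,\,j-m^{1/3})}, f_0, f_1)$; together with $\xi_2$ this immediately yields $|Q_j|\le s$. For the base case $j\le m^{1/3}$, any element $x\in Q_j$ was placed on the queue only because a $16$-move cuckoo walk launched at some insertion time $k\le j$ failed to find a vacant slot, and by Lemma~\ref{moveslemma} this forces the component of $x$ in $G'(S_k, f_0, f_1)$ to have more than $8$ edges; since $S_k\subseteq\hat{S}_0$, the property is preserved in $G'(\hat{S}_0, f_0, f_1)$. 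For the inductive step I would partition $Q_j$ into (a) elements enqueued during the window $[j-m^{1/3},j]$, which are handled exactly as in the base case via $S_k\subseteq\hat{S}_{j-m^{1/3}}$, and (b) elements already present in $Q_{j-m^{1/3}}$. The inductive hypothesis gives $|Q_{j-m^{1/3}}|\le s\le m^{1/6}$, and the algorithm performs $m^{1/3}/m^{1/6}=m^{1/6}$ queue-processing rounds during the window; since new enqueues always go to the back, every position initially in $Q_{j-m^{1/3}}$ reaches the front at least once. If at its processing time $k''$ the current front element lies in a component of size at most $8$ in $G'(S_{k''}, f_0, f_1)$, Lemma~\ref{moveslemma} guarantees the $16$ dedicated moves complete the insertion and the element leaves $Q_j$; otherwise some vertex of the same, large component is returned to the back of \text{InnerQueue}, and since $S_{k''}\subseteq\hat{S}_{j-m^{1/3}}$ it still lies in a component of size greater than $8$ in $G'(\hat{S}_{j-m^{1/3}}, f_0, f_1)$.

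The main obstacle I expect is the bookkeeping in group (b): on a failed processing round, the vertex actually returned to the back of \text{InnerQueue} is the current ``floating'' vertex at the end of the cuckoo walk and need not equal the dequeued vertex. I would handle this by tracking connected components of $G'(\hat{S}_{j-m^{1/3}}, f_0, f_1)$ rather than individual vertex identities, relying on the fact that the dequeued vertex and the returned vertex always share a connected component of $G'(S_{k''}, f_0, f_1)$, which is a subgraph of $G'(\hat{S}_{j-m^{1/3}}, f_0, f_1)$. Combining (a) and (b), every element of $Q_j$ is a vertex of $G'(\hat{S}_{j-m^{1/3}}, f_0, f_1)$ whose component has size greater than $8$, and $\xi_2$ caps the number of such vertices by $s$, giving $|Q_j|\le s$ with probability at least $1-p(n)/m^{\Omega(\sqrt{s})}$.
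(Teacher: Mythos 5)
Your proposal is correct and follows essentially the same approach as the paper: you invoke the same good event $\xi_2$ via Lemma~\ref{lem:innerqueue} and a union bound, run the same induction over windows of length $m^{1/3}$, and use Lemma~\ref{moveslemma} in the same way to show that any queued element whose component has shrunk to size at most $8$ is cleared by a single $16$-move processing round. The one cosmetic difference is the invariant you carry: you directly maintain that every element of $Q_j$ lies in a component of size $>8$ in $G'(\hat{S}_{j-m^{1/3}},f_0,f_1)$, whereas the paper maintains a size bound on an ``effective queue'' (real queue plus stale elements whose components have shrunk) and shows the stale part is flushed within one window; the two framings are interchangeable and your version is, if anything, slightly more explicit about the FIFO bookkeeping and about the fact that the vertex returned to the back of \text{InnerQueue} after a failed round need not be the dequeued one, a point the paper passes over with the remark that ``at least $16$ moves are devoted to each element.''
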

\begin{proof}
Following the approach of \cite{arbitman2009amortized}, we define a \emph{good event} $\xi_2$ that ensures that \text{InnerQueue} has size
at most $s$. Namely, let $\xi_2$ be the event that at all points $i$ in time, 
the number of vertices  $v \in G'$ with $C_{\hat{S}_i, f_0, f_1}(v) > 8$ is at most $s$. By Lemma \ref{lem:innerqueue}, 
and a union bound over all $p(n)$ operations in $\pi'$, $\xi_2$ occurs with probability at least $1-p(n)/m^{\Omega(\sqrt{s})}$.

We distinguish between the \emph{real} queue and the \emph{effective} queue. The real queue at any point $i$ in time is the set
of nodes $v$ such that $C_{S_i, f_0, f_1}(v) > 8$; the effective queue refers to the real queue, together with elements that used to be in the real queue
but have since had their components shrunk to size at most 8 via deletions, and have not yet been inserted in the inner cuckoo table. Let $E_i$
denote the set of items in the effective queue but not the real queue at time $i$.

Clearly the event $\xi_2$ guarantees that at any point in time, the size of the real queue is at most $s$. To see that in fact the size of the effective queue
never exceeds $s$, assume by way of induction that for $j \geq m^{1/3}$, the effective queue has size at most $s$ at all times less than $j$ (clearly event $\xi_2$ guarantees that this is true for all $j \leq m^{1/3}$ as a base case). In particular, the inductive hypothesis ensures that the effective queue has size at most $s$ at time $j-m^{1/3}$.
Observe that during the execution of operations $\{x_{j-m^{1/3}}, x_{j-m^{1/3}+1}, \dots, x_{j}\}$, we spend $16\frac{m^{1/3}}{m^{1/6}}\geq 16s$ moves on the elements 
of the effective queue, with at least $16$ moves devoted to each element. 
By definition of the real queue, combined with Lemma \ref{moveslemma}, at most 16 moves are required to insert each element of $E_{j-m^{1/3}}$, and 
thus all elements in $E_{j-m^{1/3}}$ are inserted into the table by time $j$. Thus, at time $j$ the size of the effective queue is at most $s$, and this completes the induction.
\end{proof}


\subsubsection{Putting it together}
By design, every insert into $C$ terminates in $O(1)$ steps. Combining Lemmata \ref{finalinnerstashlemma} and \ref{finalinnerqueuelemma}, with probability $1-p(n)/m^{-\Omega(\sqrt{s})}$, both \text{InnerQueue} and \text{InnerStash} have size at most $s/2$ at all times. Since $L$ only contains 
items from the inner queue and inner stash, it follows that $L$ never contains more than $s$ items. This proves Theorem \ref{thm:fastinsertsinner}.



\subsection{Proving Theorem \ref{thm:Qsmall}}
\label{sec:proof2}
Throughout this section, we use the following notation, following that in Section \ref{sec:proof1}. Let $\pi$ be an $n$-bounded sequence of $p(n)$ operations.
Denote by $(x_1, \dots, x_{p(n)})$ the elements inserted by $\pi$. For any integer $0< i \leq p(n)$,  let $S_i$  denote the set of elements that are
stored in the data structure just before the insertion of $x_i$, let $\hat{S_i}$ denote $S_i$ together with the elements $\{x_i, x_{i+1}, \dots, x_{i+\log^6 n}\}$, ignoring any deletions between time $i$ and time $i+\log^6 n$, and let $\bar{S_i}$ denote $S_i$ together with the elements $\{x_i, x_{i+1}, \dots, x_{i+m^{1/2}}\}$, ignoring any deletions between time $i$ and time $i+m^{1/2}$ (treating any operations past time $p(n)$ as empty). Since $\pi$ is an 
$n$ bounded sequence, we have $|S_i| \leq n$, $|\hat{S}_i| \leq n + \log^6 n$, and $|\bar{S}_i| \leq n + m^{1/2}$ for all $i$. 

In proving Theorem \ref{thm:Qsmall} it suffices to show that neither the stash nor the queue of the primary structure will grow too large.
We will use the following lemma.

\begin{lemma} \label{lem:propertytwo} Let $|S|=n$, and let $G(S, h_0, h_1)$ be a cuckoo graph with vertex set $V=[m] \times [m]$, with $m\geq (1+\epsilon)n$ for some constant $\epsilon >0$.  
There exists a constant $c_2>0$ such that with probability $1-1/n^{\Omega(\log n)}$ over the choice of $h_0$ and $h_1$, all components of $G(S, h_0, h_1)$ are of size at most $c_2\log^2 n$.
\end{lemma}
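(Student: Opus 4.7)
My plan is to combine a per-vertex geometric tail bound on the component size with a union bound over the $O(n)$ vertices of the cuckoo graph.

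First, I would fix a vertex $v$ and bound $\Pr[C_{S,h_0,h_1}(v) \geq k]$. A breadth-first exploration of $v$'s component exposes at most $nk$ incident edges before the component reaches $k$ edges, each landing inside the currently revealed (size $\leq k$) component with probability at most $1/m$, so
$$\Pr[C_{S,h_0,h_1}(v) \geq k] \leq \Pr[\mathrm{Bin}(nk, 1/m) \geq k],$$
which is the bound underlying Lemma~\ref{kmwlemma2}. The mean of this binomial is $nk/m \leq k/(1+\epsilon)$, so the threshold $k$ sits a multiplicative $(1+\epsilon)$ factor above the mean. Rather than applying Stirling (which yields the form stated in Lemma~\ref{kmwlemma2}), I would invoke a multiplicative Chernoff bound with $\delta = \epsilon$, which gives
$$\Pr[C_{S,h_0,h_1}(v) \geq k] \leq \exp\!\left(-\frac{\epsilon^2 k}{(1+\epsilon)(2+\epsilon)}\right) = e^{-\alpha(\epsilon) k},$$
with $\alpha(\epsilon) = \epsilon^2/((1+\epsilon)(2+\epsilon)) > 0$ depending only on $\epsilon$.

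Next, I would choose $c_2 \geq 2/\alpha(\epsilon)$ and set $k = c_2 \log^2 n$. The per-vertex bound becomes at most $e^{-2 \log^2 n} = n^{-2 \log n}$, and a union bound over the $2m = O(n)$ vertices of $G$ yields
$$\Pr[\exists v : |C(v)| \geq c_2 \log^2 n] \leq 2m \cdot n^{-2 \log n} = n^{-\Omega(\log n)},$$
establishing the lemma.

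The main obstacle is producing a geometric tail with base strictly less than $1$ for \emph{every} constant $\epsilon > 0$. The raw bound packaged in Lemma~\ref{kmwlemma2}, after Stirling's approximation, degrades to roughly $(e/(1+\epsilon))^k$, which is vacuous once $\epsilon \leq e - 1$. Applying Chernoff directly to the underlying $\mathrm{Bin}(nk, 1/m)$ bypasses this issue and produces a tail that decays geometrically regardless of how small $\epsilon$ is. Once this single-vertex bound is in hand, the remaining argument is a routine union bound.
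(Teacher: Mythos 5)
Your approach is essentially the same as the paper's: the paper obtains the per-vertex geometric tail $\Pr[C_{S,h_0,h_1}(v)\geq k]\leq\beta^k$ by citing \cite[Lemma~2.4]{kmw-chs-09} as a ``standard calculation,'' then union-bounds over the $O(n)$ vertices with $k=\Theta(\log^2 n)$, exactly as you do; your observation that the Stirling form in Lemma~\ref{kmwlemma2} degrades to roughly $(e/(1+\epsilon))^k$ and so is vacuous for small $\epsilon$ is correct and is why a direct Chernoff argument (or the cited lemma) is preferable here.

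One step in your Chernoff derivation needs a small repair. With $\delta=\epsilon$ fixed, Chernoff gives $\Pr[\mathrm{Bin}(nk,1/m)\geq(1+\epsilon)\mu]\leq\exp\bigl(-\epsilon^2\mu/(2+\epsilon)\bigr)$ where $\mu=nk/m$; since $\mu\leq k/(1+\epsilon)$ is only an \emph{upper} bound on $\mu$, you cannot substitute $k/(1+\epsilon)$ into the exponent — that substitution shrinks the right-hand side, which is the wrong direction, and when $m\gg n$ (which the hypothesis $m\geq(1+\epsilon)n$ permits) the true exponent $\epsilon^2 nk/(m(2+\epsilon))$ is $o(k)$ and the tail is no longer geometric. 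The fix is to let the deviation parameter grow with the slack: take $\delta = m/n-1\geq\epsilon$, so that $(1+\delta)\mu=k$ exactly and the exponent becomes
\[
\frac{\delta^2\mu}{2+\delta}\;=\;k\cdot\frac{(r-1)^2}{r(r+1)},\qquad r=m/n.
\]
The function $(r-1)^2/(r(r+1))$ is increasing for $r>1$, so over $r\geq 1+\epsilon$ it is minimized at $r=1+\epsilon$, giving precisely the geometric rate $\epsilon^2/((1+\epsilon)(2+\epsilon))$ you wrote. (Equivalently, argue by a coupling that the component tail is monotone nonincreasing in $m$ and take $m=\lceil(1+\epsilon)n\rceil$ without loss of generality.) With this patch, your union-bound conclusion goes through as stated.
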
 
\begin{proof}
It is a standard calculation that for any node $v$ in the cuckoo graph, $\Pr[C_{S, h_0, h_1}(v) \geq k] \leq \beta^k$ for some
constant $\beta \in (0, 1)$ (see e.g. \cite[Lemma 2.4]{kmw-chs-09}). The conclusion follows by setting $k=O(\log^2 n)$, and applying the union bound 
over all vertices.
\end{proof}

\subsubsection{Showing \text{OuterStash} Stays Small}
\label{Tstashsec}
\begin{lemma} \label{smallprimarystashlemma}
 Let $|S|=n$, and let $G(S, h_0, h_1)$ be any cuckoo graph with vertex set $V=[m] \times [m]$, where $(1+\epsilon)n \leq m$ for some constant
 $\epsilon > 0$.
For any $s$ such that $s \leq m^{1/6}$, with probability $1/m^{\Omega(s)}$
the total number of vertices that reside in the stash of $G(S, h_0, h_1)$ is at most $s$. 
\end{lemma}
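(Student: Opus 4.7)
The plan is to prove $\Pr[\bar{e}(G) \geq s] \leq m^{-\Omega(s)}$, where by Lemma 2.2 of \cite{kmw-chs-09} the stash size of $G(S, h_0, h_1)$ equals $\bar{e}(G) := \sum_H \max(e(H), 0)$, summed over the connected components $H$ of the cuckoo graph. I would mirror the structural ``minimal-witness'' counting strategy of \cite{kmw-chs-09, kutzelnigg}, which establishes this bound only for constant $s$, and push it through to the super-constant regime $s \leq m^{1/6}$ required here.

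First, I would observe that whenever $\bar{e}(G) \geq s$ there is a minimal witness subgraph $H \subseteq G$ of the form $H_1 \sqcup \dots \sqcup H_c$: a vertex-disjoint union of connected cyclic bipartite multigraphs with $\sum_i(|E(H_i)| - |V(H_i)|) = s$ in which every vertex has degree at least two. Next, I would bound the expected number of such witnesses by a triple sum: for any fixed shape on $v$ vertices with $v + s$ edges, the probability that a specific $(v+s)$-subset of keys realizes that shape under the random $(h_0, h_1)$ is $(1/m^2)^{v+s}$; summing over $\binom{2m}{v}$ vertex sets, $\binom{n}{v+s}$ key subsets, and a combinatorial count of shapes, and invoking $n/m \leq 1/(1+\epsilon)$, the total expected number of minimal witnesses with excess exactly $s$ should simplify to at most $(C/m)^s$ for a constant $C = C(\epsilon)$. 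Markov's inequality then yields $\Pr[\bar{e}(G) \geq s] \leq m^{-\Omega(s)}$ for $s \leq m^{1/6}$.

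The main obstacle is carrying out the combinatorial enumeration tightly enough in the super-constant regime. For constant $s$, \cite{kmw-chs-09} uses a short union bound and straightforward algebra, but for $s = \omega(1)$ the number of possible witness shapes grows quickly. Contributions of witnesses containing large components must be controlled separately---using the exponential decay of $\Pr[C_{S, h_0, h_1}(v) \geq k]$ from Lemma~\ref{kmwlemma2} and the cyclotomic-number bound of Lemma~\ref{kmwlemma1}---from those consisting of short cycle unions, which would be enumerated directly. The restriction $s \leq m^{1/6}$ is precisely where the combinatorial overhead from the number of shapes and key-subsets remains dominated by the $m^{-s}$ probability factor, preserving the $m^{-\Omega(s)}$ bound in the final step.
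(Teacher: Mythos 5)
Your proposal takes a genuinely different route from the paper's, and it also leaves a real gap at the decisive step. The paper does \emph{not} enumerate witness subgraphs. Instead, following Theorem~2.2 of~\cite{kmw-chs-09} as extended in~\cite{GoodrichMitzenmacherICALParXivVersion}, it works entirely through per-vertex bounds on the cyclotomic number $B_v$: it combines Lemma~\ref{kmwlemma1} (the bound on $\Pr[B_v \geq t \mid C_{S,h_0,h_1}(v)=k]$) with the exponential tail on $C_{S,h_0,h_1}(v)$ to get
\[
\Pr(B_v \geq t) \leq \sum_{k=1}^{\infty} \min\!\left(\left(\frac{3e^5k^3}{m}\right)^{t},1\right)\beta^k,
\]
truncates the sum at $k = O(m^{1/5})$, and concludes $\Pr(B_v \geq j+1) \leq m^{-1-\alpha j}$ for $j \leq m^{1/6}$. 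It then takes a union bound over at most $2m$ candidate ``special'' vertices (one per component contributing to the excess) and over the at most $s^k$ compositions of $s$ into $k$ positive parts, obtaining $\sum_{k=1}^{2m}\binom{2m}{k} s^k m^{-\alpha s - k} = m^{-\Omega(s)}$. No shape enumeration ever appears; the combinatorics is only $\binom{2m}{k}\cdot s^k$.

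The obstacle you yourself flag---that for $s = \omega(1)$ the number of minimal-witness shapes grows too fast for the naive $m^{-s}$ cancellation to carry the union bound---is genuine and is exactly why the paper avoids that path. Your proposed remedy (``control large components separately using Lemmata~\ref{kmwlemma2} and~\ref{kmwlemma1}'') is not actually a patch to the witness-counting argument; once you condition on component sizes and bound excess per component via $B_v$, you have reconstructed the paper's approach and abandoned explicit shape counting. As written, the crux of your argument---that the triple sum over vertex sets, key subsets, and shapes ``should simplify to at most $(C/m)^s$''---is asserted, not proved, and no mechanism is exhibited for taming the shape count in the super-constant regime. A correct witness-counting proof may exist, but you would need to either (a) show that the number of min-degree-$2$ bipartite multigraph shapes on $v$ vertices with $v+s$ edges, weighted by $(n/m^2)^{v+s}\binom{2m}{v}$, sums to $m^{-\Omega(s)}$ after restricting to $v = O(s\log m)$ (justified by the component-size tail), or (b) switch to the paper's per-vertex $B_v$ argument outright, which is both shorter and what the reference~\cite{GoodrichMitzenmacherICALParXivVersion} already provides.
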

The choice of $s \leq m^{1/6}$ is fairly arbitrary but convenient and sufficient for our purposes.
Notice Lemma~\ref{smallprimarystashlemma} readily implies Lemma~\ref{lem:innerstash}.

\begin{proof}
Here we follow the work of \cite{GoodrichMitzenmacherICALParXivVersion}, which considers the
analysis of super-constant sized stashes, extending the work of \cite{kmw-chs-09}.


The starting point is Lemmata~\ref{kmwlemma2} and~\ref{kmwlemma1} above.  Specifically,
in \cite{GoodrichMitzenmacherICALParXivVersion} it is shown that these lemmata imply that
$$\Pr(B_v \geq t) \leq \sum_{k=1}^\infty \min \left( \left( \frac{3e^5k^3}{m} \right)^t,1 \right) \beta^k$$ for
some constant $\beta$.  In particular, we can concern ourselves with values of $k$ that are 
$O(m^{1/5})$, since the summation over $\beta^k$ terms for larger values of $k$ is dominated
by $2^{-\Omega(m^{1/5})} = m^{-\Omega(m^{1/6})}$.

It follows that $\Pr(B_v \geq t)$ is at most $\max\left(m^{-\Omega(t)},m^{-\Omega(m^{1/6})}\right)$.
We therefore claim that $\Pr(B_v \geq j+1) \leq m^{-1-\alpha j}$ for some constant $\alpha$
for $j \leq m^{1/6}$.  

Now, following the derivation of Theorem 2.2 of \cite{kmw-chs-09}, we have the probability that
the stash exceeds size $s$ is given by the probability that $2m$ independently 
chosen components have an excess of more than $s$ edges, which can be bounded as:
\begin{eqnarray*}
\Pr(\bar{e}(g) \geq s)& \leq & \sum_{k=1}^{2m} {2m \choose k} s^{k} m^{-\alpha s- k}\\
& \leq & \sum_{k=1}^{2m}  m^{-\alpha s} \left( \frac{2es}{k} \right)^k \\
& \leq & (2m) m^{-\alpha s} e^{2s} \\
&  = & m^{-\Omega(s)}.
\end{eqnarray*}
Here the second to last line follows from a straightforward optimiziation to find the the maximum of
$(x/k)^k$ (which occurs at $k=x/e$). 
\end{proof}

We now show formally that the outer stash is small at  all points in time.

\begin{lemma} \label{finalouterstashlemma} Let $\pi$ be any $n$-bounded sequence of operations of length $p(n)$.  For $s \leq m^{1/5}$, with probability
$1-p(n)/m^{\Omega(s)}$ over the choice of hash functions $h_0, h_1$, \text{OuterStash} has size at most $s$ at all points in time.
\end{lemma}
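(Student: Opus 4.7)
The plan is to follow the proof of Lemma \ref{finalinnerstashlemma} closely, with parameter choices adapted to the outer table. I would define a \emph{good event} $\xi$ as the conjunction of two conditions: at every time $i$, (a) the stash of the cuckoo graph $G(\bar{S}_i, h_0, h_1)$ has size at most $s/2$; and (b) every connected component of $G(\bar{S}_i, h_0, h_1)$ has size at most $c_2 \log^2 n$. Part (a) follows by applying Lemma \ref{smallprimarystashlemma} to $\bar{S}_i$ (noting $|\bar{S}_i| \le n + m^{1/2}$ still lies below the density threshold for a slightly smaller constant $\epsilon'$ when $m$ is large), and part (b) follows from Lemma \ref{lem:propertytwo}. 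A union bound over all $p(n)$ operations gives $\Pr[\xi] \ge 1 - p(n)/m^{\Omega(s)} - p(n)/n^{\Omega(\log n)}$, which matches $1-p(n)/m^{\Omega(s)}$ in the regime $s \le O(\log n)$ and is otherwise dominated by the component-size contribution (in particular, the application to Theorem \ref{thm:Qsmall} sets $s = \Theta(\log^6 n)$, for which the weaker $p(n)/n^{\Omega(\log n)}$ term is what governs the probability).

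Next, as in Lemma \ref{finalinnerstashlemma}, I would distinguish the \emph{real} stash at time $i$ (vertices in the stash of $G(S_i, h_0, h_1)$) from the \emph{effective} stash (the actual contents of the \text{OuterStash} data structure). Since stash size is monotonically nondecreasing under edge additions and $S_i \subseteq \bar{S}_i$, the event $\xi$ forces the real stash to have size at most $s/2$ and every live component to have size at most $c_2 \log^2 n$ at every time. I would then induct on $j$ to show the effective stash has size at most $s$. For the base case $j \le m^{1/2}$, each item ever enqueued onto \text{OuterStash} during $[0,j]$ was placed there precisely because it belonged to the real stash at its arrival time, and hence lies in the stash of $\bar{S}_0$, which is bounded by $s/2$ under $\xi$.

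For the inductive step, the algorithm spends $2 \cdot m^{1/2}/m^{1/4} = 2m^{1/4}$ stash-processing moves in the window $[j - m^{1/2},j]$. By Lemma \ref{moveslemma} combined with the component-size bound from $\xi$, inserting any item whose component is currently acyclic takes at most $2c_2\log^2 n$ moves. Since $2m^{1/4} \ge 2s\cdot c_2\log^2 n$ whenever $s\le m^{1/5}$ and $m$ is sufficiently large (using $m^{1/20} \ge c_2\log^2 n$ for large $n$), every item in $E_{j-m^{1/2}}$ is fully processed by time $j$; combined with the $s/2$ bound on the real stash at time $j$, the effective stash at time $j$ has size at most $s$.

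The main obstacle is the \text{OuterStash}'s \emph{return-to-front} processing policy: a front item whose component is still cyclic cannot be inserted in a given round and will repeatedly be returned to the front, blocking items queued behind it. The resolution is that any such blocking item must, by definition, still lie in the real stash and is therefore already accounted for in the $s/2$ bound granted by $\xi$; the $2m^{1/4}$-move budget is meanwhile spent only on acyclic items whose insertion actually terminates. Carefully book-keeping which items are blocked versus progressing at each moment, and confirming that real-stash items plus in-flight acyclic items together never exceed $s$, is the main technical burden of the formal argument.
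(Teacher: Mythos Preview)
Your proposal is correct and follows essentially the same route as the paper: define a good event combining Lemma~\ref{smallprimarystashlemma} (applied to $\bar{S}_i$ with the $\epsilon' = \epsilon/2$ adjustment) and Lemma~\ref{lem:propertytwo}, distinguish real from effective stash, and induct over windows of length $m^{1/2}$ using the $m^{1/4}$-move budget together with the $O(\log^2 n)$ component-size bound to clear $E_{j-m^{1/2}}$. The only cosmetic differences are that the paper sets the good-event stash bound to $s$ rather than $s/2$, and the paper's proof (in slight tension with its own algorithm description) treats failed stash insertions as returned to the \emph{back} of the deque, thereby sidestepping the front-blocking issue you carefully flagged; your observation about the $n^{\Omega(\log n)}$ term dominating for large $s$ is also valid and glossed over in the paper.
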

\begin{proof}
We define a \emph{good event} $\xi_3$ that ensures that \text{OuterStash} has size
at most $s$. Namely, let $\xi_3$ be the event that at all points $i$ in time, 
the number of vertices in the stash of $G(\bar{S}_i, h_0, h_1)$ is at most $s$, and additionally all connected components 
of $G(\bar{S}_i, h_0, h_1)$ have size at most $\log^2 n$. By Lemmata  \ref{lem:propertytwo} and \ref{smallprimarystashlemma}, 
as well as a union bound over all $p(n)$ operations in $\pi$, $\xi_3$ occurs with probability at least $1-p(n)/m^{\Omega(s)}$.
As in \cite{arbitman2009amortized}, a minor technical point in applying Lemma \ref{smallprimarystashlemma} is that $\bar{S}_i$ is $n+m^{1/2}$-bounded, not $n$-bounded.
But we can handle this by applying Lemma \ref{smallprimarystashlemma} with $\epsilon' = \epsilon/2$, since for large enough $m$, 
$(1+\epsilon/2)(n+m^{1/2})\leq (1+\epsilon)n\leq m$.

We distinguish between the \emph{real} stash and the \emph{effective} stash. The real stash at any point $i$ in time is the set
of nodes $v \in V$ such that reside in the stash of $G(S_i, h_0, h_1)$; the effective stash refers to the real stash, together with elements that used to be in the real stash but have since had cycles removed from their components due to deletions, and have not yet been inserted in the outer cuckoo table. Let $E_i$
denote the set of items in the effective stash but not the real stash of $G(S_i, h_0, h_1)$.

Clearly the event $\xi_3$ guarantees that at any point in time, the size of the real stash is at most $s$. To see that the size of the effective stash
never exceeds $s$,  assume by way of induction that for $j \geq m^{1/2}$, the effective stash has size at most $s$ at all times less than $j$ (clearly event $\xi_3$ guarantees that this is true for all $j \leq m^{1/2}$ as a base case). In particular, the inductive hypothesis ensures that the effective stash has size at most $s$ at time $j-m^{1/2}$.
Observe that during the execution of operations $\{x_{j-m^{1/2}}, x_{j-m^{1/2} +1}, \dots, x_{j}\}$, we spend at least $m^{1/2}/m^{1/4}=m^{1/4}$ moves in total on the elements 
of the effective stash. Since all connected components have size at most $\log^2 n$, Lemma \ref{moveslemma} implies the insertion of any item $x$ in the effective stash never requires more than $2 \log^2 n$ moves before it succeeds or causes $x$ to be returned to the back of the stash. Thus,
all elements in the effective stash at time $j-m^{1/2}$ require at most $2m^{1/5}\log^2n \leq m^{1/4}$ operations in total to process, and it
follows that all elements in $E_{j-m^{1/2}}$ are inserted into the table by time $j$. Thus, at time $j$ the size of the effective stash is at most $s$, and this completes the induction.
\end{proof}

\subsubsection{Showing \text{OuterQueue} Stays Small} 
\label{Tqueuesec}
Let $|S|=n$, and let $G(S, h_0, h_1)$ be any cuckoo graph with vertex set $[m] \times [m]$. 
It is well-known that, for any node $v$, there is significant probability that an insertion of $v$ into $G$ takes $\Omega(\log n)$ time.
But one might hope that for a sufficiently large set of distinct vertices $\{v_1, \dots, v_N\}$, the 
\emph{average} size of the connected components of the $v_i$'s is constant with overwhelming probability over choice of $G$, 
and thus any sequence of $N$ insertions
will take $O(N)$ time in total.
Indeed, Lemma 4.4 of Arbitman {\it et al}. \cite{arbitman2009amortized} establishes that, for any distinct vertices $\{v_1, \dots, v_N\}$ with $N \leq \log n$, 
$\sum_{i=1}^N C_{S, h_0, h_1}(v_i) = O(N)$ with probability $1-2^{-\Omega(N)}$ over the choice of $h_0$ and $h_1$. 
Roughly speaking, Arbitman {\it et al}. use this result to conclude that a logarithmic sized queue suffices for deamortizing cuckoo hashing (where all guarantees hold with
high probability), since any sequence of
$\log n$ insertions can be processed in $O(\log n)$ time steps.

Our goal is to achieve guarantees which hold with overwhelming probability.
To achieve this, we use the fact that we can afford to keep a queue of super-logarithmic size without affecting the asymptotic
space usage of our algorithm.
We show that any sequence of, say, $N=\log^6 n$ operations can be cleared from the queue in $O(N)$ time with probability
$1-1/n^{\omega(1)}$. It follows that with overwhelming probability the queue does not overflow. 
Unfortunately, the techniques of \cite{arbitman2009amortized} do not generalize to values of $N$ larger than $O(\log n)$, so we generalize
their result using different methods. 

Intuitively, one should picture the random process we wish to analyze as a 
standard queueing process, where the time to handle each job is a random
variable.  In our case, the random variable for a job -- which is a
key $k$ to be placed -- is the time to find a spot for $k$ in the cuckoo table, which is proportional to the size of the
connected component in which $k$ is placed in the cuckoo graph  by Lemma \ref{moveslemma}. This random variable is known to be constant on
average and have exponentially decreasing tails, so if the job times were
independent, this would be a normal queue (more specifically, a Galton-Watson process), and the bounds would follow from standard analyses.

Unfortunately, the job times in our setting are not independent. 
Roughly speaking, our analysis proceeds by showing that with overwhelming probability on a given instance
of the cuckoo graph, the expected value of the size of the connected component of a randomly chosen vertex is close to its expectation if the 
graph was chosen randomly.

Our main technical tool will be the following lemma.
\begin{lemma} \label{queuelemma} Let $|S|=n$, and let $G(S, h_0, h_1)$ be a cuckoo graph with vertex set $V=[m] \times [m]$. Let $\{v_1, \dots, v_N\}$ be a set of $N>\log^6(n)$ vertices chosen uniformly at random (with replacement) from the cuckoo graph $G(S, h_0, h_1)$.
There is a constant $c$ such that with probability
$1-n^{-\Omega(\log n)}$ (over both the choice of the $v_i$'s and the generation of the cuckoo graph $G$), $\sum_{i=1}^N C_{S, h_0, h_1}(v_i) \leq cN$.
\end{lemma}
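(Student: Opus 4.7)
The plan is to execute a two-stage analysis that decouples the randomness of $G$ from that of the $v_i$'s. Define the event $\mathcal{E}$ that the cuckoo graph is \emph{typical}, meaning (a) no connected component of $G$ has size greater than $c_2 \log^2 n$, and (b) the weighted average component size
\[
\mu_G \;:=\; \frac{1}{2m}\sum_v C_{S,h_0,h_1}(v) \;=\; \frac{1}{2m}\sum_H v_H\, e_H
\]
(where $v_H$ and $e_H$ denote the vertex and edge counts of component $H$) is bounded by a fixed constant $c'$. I will show $\Pr[\mathcal{E}] \ge 1 - n^{-\Omega(\log n)}$, and then argue that conditional on $\mathcal{E}$, $\sum_i C(v_i) = O(N)$ with overwhelming probability over the $v_i$'s. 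The constant $c$ in the lemma then follows by union-bounding the two failure modes.

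Part (a) of typicality is Lemma~\ref{lem:propertytwo}. For part (b), I first observe that the exponential tail bound $\Pr[C(v) \ge k] \le \beta^k$ established in the proof of Lemma~\ref{lem:propertytwo} gives $E[\mu_G] = E[C(v)] \le \sum_{k\ge 1} \beta^k = O(1)$. To promote this expectation bound into an overwhelming-probability bound, I apply an Azuma-type concentration inequality to the edge-exposure Doob martingale of the cuckoo graph (each of the $n$ keys corresponds to one independently placed edge). The key observation is that on event~(a), changing the placement of a single edge can only split or merge components of size $O(\log^2 n)$, which by a direct calculation of the form $v_H e_H - v_{H_1}e_{H_1} - v_{H_2}e_{H_2} \le 2 v_{H_1} v_{H_2} = O(\log^4 n)$ alters $\sum_v C(v)$ by at most $O(\log^4 n)$, hence $\mu_G$ by at most $O(\log^4 n/n)$. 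Using a typical-bounded-differences variant of Azuma (or, equivalently, applying standard Azuma to a truncated surrogate $\tilde\mu_G$ that coincides with $\mu_G$ on event~(a) and then paying $\Pr[\bar{\mathcal{E}}_1]$ separately), I obtain
\[
\Pr[\mu_G > c'] \;\le\; \exp\bigl(-\Omega(n/\log^8 n)\bigr) + \Pr[\bar{\mathcal{E}}_1] \;=\; n^{-\Omega(\log n)}
\]
for a sufficiently large constant $c'$, so $\Pr[\mathcal{E}] \ge 1 - n^{-\Omega(\log n)}$.

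Conditional on $\mathcal{E}$, the random variables $C(v_1), \dots, C(v_N)$ are i.i.d.~given $G$, each bounded in $[1, c_2 \log^2 n]$ with mean $\mu_G \le c'$. Hoeffding's inequality (applied conditionally on $G$) then yields
\[
\Pr\Bigl[\textstyle\sum_{i=1}^N C(v_i) > (c'+1)N \,\Big|\, G\Bigr] \;\le\; \exp\bigl(-\Omega(N/\log^4 n)\bigr),
\]
which for $N \ge \log^6 n$ is $\exp(-\Omega(\log^2 n)) = n^{-\Omega(\log n)}$. Setting $c := c'+1$ and union-bounding completes the proof. The main obstacle is the concentration of $\mu_G$ in the middle paragraph: a naive Azuma argument fails because the $O(\log^4 n/n)$ difference bound only holds on event~(a), while a single edge swap could in principle alter $\mu_G$ by $\Theta(n)$ on atypical graphs. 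Handling this requires either the typical-bounded-differences machinery or a careful truncation argument; once $\mu_G = O(1)$ is established with overwhelming probability, the rest of the proof is a routine Hoeffding application.
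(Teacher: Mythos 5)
Your proposal is correct and follows essentially the same two-stage strategy as the paper: a truncated/typical-event edge-exposure martingale (Azuma) bounds the graph-averaged component size $\frac{1}{2m}\sum_v C(v)$ by a constant with probability $1-n^{-\Omega(\log n)}$, and then a conditional concentration bound controls $\sum_i C(v_i)$ over the random sample. The only presentational difference is that you run a single Azuma argument on the aggregate $\sum_v C(v)$ directly (using the $O(\log^4 n)$ Lipschitz bound on typical graphs), whereas the paper decomposes by component size, concentrating each $X_i$ separately and handling the tail $X_*$ with its own truncated martingale; the paper also phrases the final sampling step as Azuma rather than Hoeffding, which is equivalent for i.i.d.\ bounded variables.
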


\begin{proof}
Let $X_i$ be the number of vertices in $G(S, h_0, h_1)$ in components of size $i$, and let $\mu_i = \E[X_i]$ be the expected number of vertices
in components of size $i$ in a random cuckoo graph, and let $\mu = \frac{1}{2m} \sum_{i=1}^{2m} i \mu_i$ be the expected size of the connected component of a random node in a random cuckoo graph. By standard calculations (see e.g.\cite[Lemma 2.4]{kmw-chs-09}), there is a constant $\beta \in (0, 1)$ such that 
for any $v$, $\E[C_{S, h_0, h_1}(v)] \leq \sum_{k=1}^{\infty} \beta^k = O(1)$, where the expectation is taken over choice of $h_0$ and $h_1$, and hence $\mu=O(1)$.

For \emph{fixed} hash functions $h_0$ and $h_1$, we can write \begin{equation} \label{eq:cuckoo} \E_{v \in V} C_{S, h_0, h_1}(v) = \frac{1}{2m} \sum_{v \in V} C_{S, h_0, h_1}(v) = \frac{1}{2m}\sum_{i=1}^{2m} i X_i. \end{equation} 
Our goal is to show that with overwhelming probability over the choice of $h_0$ and $h_1$, the right hand side of Equation \ref{eq:cuckoo} is close to $\mu$. 
We will do this by showing that for small enough $i$, $X_i$ is tightly concentrated around $\mu_i$, and that larger $i$ do not contribute significantly
to the sum. 
 
\begin{lemma} \label{lemma1} The following properties both hold.
\begin{enumerate}
\item Suppose $\mu_i \geq n^{2/3}$.
Then $\Pr(|X_i - \mu_i| \geq n^{2/3}) = 2e^{-\widetilde{\Omega}(n^{1/3})}.$
(The $\widetilde{\Omega}$ notation hides factors polylogarithmic in $n$.)
\item Let $i^*$ be the smallest value of $i$ such that $\mu_i <
n^{2/3}$, and let $X_*$ be the number of vertices in components of
size at least $i^*$.  Then $\Pr[X_* \geq \gamma n^{2/3}] \leq 1/n^{\Omega(\log n)}$ for some constant $\gamma$.
\end{enumerate}
\end{lemma}
\begin{proof}

\begin{enumerate} 
\item This follows from a standard application of Azuma's inequality,
applied to the edge exposure martingale that reveal the edges of the cuckoo graph one
at a time. More specifically, we reveal the edges of $G$ one at a time in an arbitrary order, say
$e_1, \dots e_{n}$ and let $Z_j=E[X_j|e_1, \dots, e_{j-1}] $. Then $Z_j$ is a martingale, and changing a single edge can 
only change the number of components of size $j$ by a constant (specifically, two), and hence $|Z_j - Z_{j-1}| \leq 2j$ for all $j$. Thus,  
by Azuma's inequality 
$$\Pr(|X_i - \E[X_i]| \geq 2i \lambda \sqrt{n}) = \Pr(|Z_{n} - Z_0| \geq 2i\lambda \sqrt{n}) \leq 2e^{-\lambda^2/2}.$$ 
Setting $\lambda=\frac{n^{2/3}}{2i\sqrt{n}}=\Omega(n^{1/6}/i)$, we see 

$$\Pr(|X_i - \E[X_i]| \geq n^{2/3}) \leq e^{-\Omega(n^{1/3}/i^2)}.$$
By a standard calculation, $\E[X_i] \geq n^{2/3}$ implies $i \leq c_1 \log n$ for some constant $c_1$, and the theorem follows.

\item Since $\Pr[C_{S, h_0, h_1}(v) \geq k] \leq \beta^k$ for some constant $\beta \in (0, 1)$, it follows easily that 
$E[X_*] \leq \sum_{i^*}^\infty n \beta^i$, where $n\beta^{i^*} \leq n^{2/3}$.  It follows that $E[X_*] = O(n^{2/3})$.  

In order to get concentration of $X_*$ about its mean, we use a slight modification of the edge exposure martingale, which will essentially
allow us to assume that all connected components have size $O(\log^2 n)$ when attempting to bound the differences between martingale steps, which happens with very high probability by Lemma \ref{lem:propertytwo}. This technique is formalized for example in Theorem 3.7 of \cite{McDiarmid-98-concentration}.  

Let $Q$ be the event that all connected components are of size at most $\log^2 n$.
We reveal the edges of $G$ one at a time in an arbitrary order, say
$e_1, \dots e_n$ and let $Z_j=E[X_{*}|e_1, \dots, e_{j-1}] $ if the
edges $e_1, \dots e_{j-1}$ do not include a component of 
size greater than $\log^2 n$, and $Z_j=Z_{j-1}$ if 
the
edges $e_1, \dots e_{j-1}$ do include a component of 
size greater than $\log^2 n$.
Then $Z_j$ is a martingale, and since changing a single edge can only change
the number of components of size $i$ by at most two,
we see that $|Z_i - Z_{i-1}| \leq 2c\log^2 n$.  
Now as $Z_n$ will equal $X_*$ except in the case where event $Q$ does not hold, 
we can apply Azuma's inequality to the above martingale to conclude that
$$\Pr(|X_{*} - \E[X_{*}]| \geq 2c \lambda
  \sqrt{n} \log n) \leq 2e^{-\lambda^2/2} +Pr(\neg Q).$$ Setting
$\lambda=\frac{n^{1/3}}{c\log^2(n) \sqrt{n}}$, we
obtain $$\Pr(|X_{*} - \E[X_{*}]| \geq n^{1/3}) \leq
e^{-\widetilde{\Omega}(n^{2/3})} + \frac{1}{n^{\Omega(\log n)}}.$$
Noting that $\E[X_*] = O(n^{2/3})$,  we conclude $$\Pr(X_{*} \geq \gamma n^{1/3}) \leq \frac{1}{n^{\Omega(\log n)}}$$
for some constant $\gamma$.  
\end{enumerate}
\end{proof}

Properties 1 and 2 of Lemma \ref{lemma1} together imply that with very high probability over choice of $h_0$ and $h_1$, 
$$\sum_{i=1}^{2m} i X_i = \sum_{i=1}^{i^*-1} \left(\mu_i \pm n^{2/3} \right) + O(n^{2/3}) \leq \sum_{i=1}^{2m} 2 \mu_i  + O(n^{2/3} \log^2 n).$$
Combining the above with Equation \ref{eq:cuckoo},  with overwhelming probability over choice of $h_0$ and $h_1$ it holds that
 $$\E_{v \in V} C_{S, h_0, h_1}(v) = \frac{1}{2m}\sum_{i=1}^{2m} i X_i \leq 2 \mu + o(1)=O(1).$$
Thus, we have shown that with very high probability
over the choice of $G$, there is a constant $c_3$ such that $\E_{v \in V}[C_{S, h_0, h_1}(v)] \leq c_3$.

Our last step in proving Lemma \ref{queuelemma} is to show that if we choose a set of vertices at random, the sum of the
component sizes is concentrated around its mean. Indeed, by applying a similar argument as in the proof of Lemma \ref{lemma1} Property Two,
we can assume for all $v \in V$, $C_{S, h_0, h_1}(v) \leq c_2 \log^2 n$, as long as we add an additional term of $\frac{1}{n^{\Omega(\log n)}}$ in the
bound on the probability obtained using Azuma's inequality.
This yields
$$\Pr(|\sum_{i=1}^S C_{S, h_0, h_1}(v_i)- c_3| \geq \lambda \sqrt{S} c_2 \log^2 n) \leq 2 e^{-\lambda^2/2} + \frac{1}{n^{\Omega(\log n)}}.$$
Setting $\lambda = \sqrt{S}/\log^2 n$ yields 
$$Pr(|\sum_{i=1}^S C_{S, h_0, h_1}(v_i) - c_3| \geq c_2) \leq e^{-\Omega(S/\log^4 n)}+ \frac{1}{n^{\Omega(\log n)}}.$$
The conclusion follows, with $c=c_3 + c_2$.
\end{proof}

Notice that for any set of \emph{distinct} items $\{x_1, \dots, x_N\}$ to be inserted, the sets $\{h_0(x_1), \dots, h_0(x_N)\}$ and $\{h_1(x_1), \dots, h_1(x_N)\}$  are uniformly distributed set of vertices in $G(S, h_0, h_1)$. 
Thus, Lemma \ref{queuelemma} ensures that for distinct items $\{x_1, \dots, x_N\}$ with $N \geq \log^6 n$, and for any set $S$ of size $n$, with probability
at least $1-1/n^{\log n}$ over the choice of $h_0$ and $h_1$, $\sum_{i=1}^N C_{S, h_0, h_1}(v_i) \leq cN$. 

With this in hand, we are ready to show that with overwhelming probability \text{OuterQueue} does not exceed size $\log^6 n$ over any sequence of $\poly(n)$ operations.

\begin{lemma} \label{smallprimaryqueue}
With probability $1-n^{-\Omega(\log n)}$, the queue of the primary structure has size at most $\log^6 n$ at all times. 
\end{lemma}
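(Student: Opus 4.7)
The plan is to mirror the structure of Lemma~\ref{finalouterstashlemma}, replacing the stash-size bound of Lemma~\ref{smallprimarystashlemma} with the queueing estimate of Lemma~\ref{queuelemma}. First, I would fix $\alpha$, the number of outer substeps per add operation, so that $\alpha > 2c$, where $c$ is the constant supplied by Lemma~\ref{queuelemma}; this provides a strict surplus of moves per add operation over the average insertion work.

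The core accounting fact is the following. Each add operation enqueues exactly one item at the back of \text{OuterQueue} and then performs exactly $\alpha$ substeps, each of which either (a) places the front item in an empty cell, (b) evicts some $y$ and sends it to \text{OuterStash} upon second-cycle detection, or (c) evicts $y$ and pushes it to the front of the queue. By Lemma~\ref{moveslemma}, the number of substeps required to drive the chain of a newly added item $x_i$ to termination of type (a) or (b) is at most $2C_{S, h_0, h_1}(x_i)$, measured in the cuckoo graph current at the time the chain executes. Hence if \text{OuterQueue} is empty at some time $t_0$, and $k$ add operations introducing items $x_1,\dots,x_k$ occur in $(t_0, t]$ with $2\sum_{i=1}^k C(x_i) \le \alpha k$, then all $k$ chains terminate inside the window and \text{OuterQueue} is empty at time $t$: by the choice of $t_0$ as the most recent empty time, the queue is non-empty throughout $(t_0, t]$, so none of the $\alpha k$ substeps is wasted on an empty queue.

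Next, I would define a good event $\xi$ by partitioning the time axis into consecutive sub-windows of $\log^6 n$ add operations and requiring that, in each sub-window, the sum of component sizes of the items inserted (measured in $G(\bar{S}_{\tau}, h_0, h_1)$ for the starting time $\tau$ of the sub-window, analogously to $\bar{S}_i$ in Lemma~\ref{finalouterstashlemma}) is at most $c\log^6 n$. By the corollary to Lemma~\ref{queuelemma} stated immediately after it (and applied with $\epsilon'=\epsilon/2$ to absorb the small look-ahead into the density bound), each sub-window fails with probability $n^{-\Omega(\log n)}$; a union bound over the $O(p(n)/\log^6 n)$ sub-windows retains this bound. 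Under $\xi$, any interval $(t_0, t]$ containing $k\ge \log^6 n$ insertions satisfies $\sum_i 2C(x_i) \le 2ck$, since sub-window bounds compose additively and component sizes in the actually evolving graph are dominated by those in the augmented $\bar{S}$.

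Finally, I would argue by contradiction. Suppose \text{OuterQueue} first exceeds $\log^6 n$ at some time $t$, and let $t_0$ be the most recent time at which the queue was empty (which exists since the queue starts empty). Since each add operation enqueues at most one item, at least $k > \log^6 n$ add operations occur in $(t_0, t]$; under $\xi$ the total work is then at most $2ck < \alpha k$, so by the core accounting fact \text{OuterQueue} would be empty at $t$, contradicting the hypothesis. The main obstacle is making the additive composition across sub-windows airtight, in particular handling boundary effects when $(t_0,t]$ is misaligned with the fixed sub-window partition and accounting for the occasional stash-flushing moves performed every $m^{1/4}$ operations; both can be absorbed by a negligible inflation of $\alpha$ and a trivial shift in the sub-window origin, without altering the structure of the argument.
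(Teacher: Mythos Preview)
Your approach is correct and reaches the same conclusion, but the argument structure genuinely differs from the paper's. The paper follows exactly the template of Lemma~\ref{finalouterstashlemma} that you cite: it defines a sliding-window good event $\xi_4$ (for every $j$, the sum of component sizes of the $\log^6 n$ items $x_{j-\log^6 n},\dots,x_j$, measured in $\hat{S}_j$, is at most $c\log^6 n$) and then runs a forward induction on $j$, using the inductive hypothesis at $j-\log^6 n$ together with the $\log^6 n$-step window to conclude the queue is small at $j$. You instead use a classic \emph{busy-period} argument: assume a first overflow at $t$, rewind to the last empty time $t_0$, and show the $\alpha k$ substeps available in $(t_0,t]$ exceed the work needed, yielding a contradiction. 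Your decomposition into a fixed partition of sub-windows (rather than the paper's sliding window) and your appeal to $\bar{S}_\tau$ with the $m^{1/2}$ look-ahead (rather than $\hat{S}_j$ with the $\log^6 n$ look-ahead) are both legitimate design choices; the fixed partition saves a factor in the union bound and the larger look-ahead is certainly safe.

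Two remarks. First, your choice $\alpha > 2c$ is actually more careful than the paper's $\alpha = c$, since Lemma~\ref{moveslemma} contributes a factor of $2$ that the paper silently absorbs. Second, the point you flag as ``the main obstacle'' is real and deserves more than a sentence: to justify that $C(x_i)$ in the actually evolving table is dominated by $C_{\bar{S}_\tau}(x_i)$, you need each $x_i$'s chain to execute while the table contents are still a subset of $\bar{S}_\tau$, i.e., within $m^{1/2}$ operations of $\tau$. The paper's rolling induction sidesteps this because the inductive hypothesis itself bounds the backlog at every earlier time. In your contradiction argument the same fact is available---the ``first overflow at $t$'' assumption gives queue size $\le \log^6 n$ throughout $(t_0,t)$, and since the queue is FIFO this forces $x_i$'s chain to complete within $O(\log^6 n)$ add-operations of its enqueue---so the domination holds comfortably under the $m^{1/2}$ look-ahead. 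You should state this explicitly rather than leave it folded into ``boundary effects,'' since without it the additive composition across sub-windows does not follow.
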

\begin{proof}
We define a \emph{good event} $\xi_4$ that ensures that \text{OuterQueue} has size
at most $\log^6 n$. Namely, let $\xi_4$ be the event that for all times $\log^{6} n\leq j \leq p(n)$, 
it holds that $\sum_{i=j-log^6 m}^jC_{\hat{S}_j, f_0, f_1}(x_i) \leq c\log^6 n$. By Lemma \ref{queuelemma}, 
and a union bound over all $p(n)$ operations in $\pi$, $\xi_4$ occurs with probability at least $1-1/n^{\log n}$.

For $j > \log^6 n$, suppose by induction there are at most $\log^6 n$ items in the effective queue at all times less than $j$ (as a base case, this is clearly true for all $j \leq \log^6 n$). In particular, this holds at time $j-\log^6 n$.
We
can assume all items in the effective queue at time $j-\log^6 n$ are distinct because we process deletions immediately. 
Since all the (at most) $\log^6 n$ items in the effective queue are distinct, event $\xi_4$ guarantees that all of these items can be cleared from the queue in $c \log^6 n$ steps.  Setting the number of steps expended on elements of the queue every operation to $\alpha=c$, all
(at most) $\log^6 n$ items in the effective queue at time $j-\log^6 n$ will be cleared from the queue 
before time $j$. Thus, at time $j$, the queue contains at most $\log^6 n$ items, and this completes the induction.
\end{proof}

$Q$ only contains items from \text{OuterStash} and \text{OuterQueue}, and we have shown both deques contain
$\log^6 m$ items with overwhelming probability. Theorem \ref{thm:Qsmall} follows.

\subsection{Putting it All Together}
 For any constant $\epsilon > 0$, our nested cuckoo construction uses $(2+\epsilon)n$ words for the outer cuckoo table, $O(m^{2/3})$ for the inner structure $Q$, and (with overwhelming probability) $O(\log^2 m)$ words for the cycle-detection mechanisms. The latter two space costs are dominated by the first, and so 
 our total space usage is $(2+\epsilon)n$ words in total for constant $\epsilon >0$. 
 
 We derive our final theoretical guarantees on the running time of each operation for both of our constructions.
 
\textbf{Construction One:} Inserts take $O(1)$ time by design. Lookups and deletions require examining $T_0[h_0(k)]$,
$T_1[h_1(k)]$,
$R_0[f_0(k)]$,
and $R_1[f_1(k)]$, and 
performing a lookup in $L$, which in Construction 1 potentially requires examining all elements in $L$. 
 Theorems \ref{thm:fastinsertsinner} and \ref{thm:Qsmall} together imply that for $0 < s < \log^2 n$, with probability at least $1-1/m^{\Omega(\sqrt{s})}$,
 $L$ contains only $s$ items.
 Thus, lookups and removals take time $s$ with probability at least $1-1/m^{\Omega(\sqrt{s})}$, even though $s$
 is not a tuning parameter of our construction. 
 
 To clarify, the hidden constant in the exponent is \emph{fixed, i.e. independent of all parameters}. Thus, 
 for any constant $c$, there is some larger constant $s$ such that the probability lookups and removals take time more than $s$ is bounded by $1-1/m^c$.
 We also remark that it is straightforward to extend our analysis to all $s \leq m^{1/12}$, but for clarity
 we have not presented our results in this generality.
 
\textbf{Construction Two}: Again, inserts take $O(1)$ time by design. As in Construction One,  lookups and deletions require examining $T_0[h_0(k)]$,
$T_1[h_1(k)]$,
$R_0[f_0(k)]$,
and $R_1[f_1(k)]$, and 
performing a lookup in $L$; assuming $L$ has size $s=O(\log^{1/2} n)$, such a lookup can be performed in constant time using our atomic stash.
 Theorems \ref{thm:fastinsertsinner} and \ref{thm:Qsmall} therefore imply that lookups and removes take $O(1)$ time with probability $1-1/m^{\Omega(\log^{1/4}{n})}$.
 
\subsection{Extensions}
\label{sec:extensions}
\subsubsection{$\poly\log(n)$-wise Independent Hash Functions}
We remark that an argument of Arbitman {\it et al}. \cite{arbitman2009amortized} implies almost without modification that in Construction Two, insertions, deletions,
and lookups take $O(1)$ time with overwhelming probability even if the hash functions $h_0$, $h_1$, $f_0$, and $f_1$
are chosen from $\poly \log(n)$-wise independent hash families. For completeness, we reproduce this argument in our context.

In the analysis above, the only places we used the independence of our hash functions were in Lemmata \ref{lem:innerstash}, \ref{lem:innerqueue}, \ref{smallprimarystashlemma}, and \ref{queuelemma} above. These lemmata allowed us to define four events that occur with high or overwhelming probability, whose occurrence guarantee that the our time bounds hold. Specifically, for any fixed $s$, our time bounds for Construction Two hold if none of the following ``bad'' events  occur:

\begin{enumerate}
\item Event 1: There exists a set $\{v_1, \dots, v_{N}\}$ of $N=O(\log^6 n)$ vertices in the outer cuckoo graph, such that  $\sum_{i=1}^N C_{S, h_0, h_1}(v_i) > cN$ (this is the complement of event $\xi_4$ from Section \ref{Tqueuesec}).
\item Event 2: There exists a set  of at most $2m$ vertices in the outer cuckoo graph, such that the number
of stashed elements from the set exceeds $O(\log^6 n)$ (this is the complement of event $\xi_3$ from Section \ref{Tstashsec}).
\item Event 3: There exists a set of at least $O(\log^{1/2} n)$ vertices in the inner cuckoo graph, all of whose connected components at some point in time 
have size greater than 8 (this is the complement of the event $\xi_2$ defined in Section \ref{Cqueuesec}).
\item Event 4: There exists a set of vertices in the cuckoo graph, such that the number
of stashed elements from the set at some point in time exceeds $O(\log^{1/2} n)$ (this is the complement of the
event $\xi_1$ defined in Section \ref{Cstashsec}).
\end{enumerate}

Lemmata  \ref{lem:innerstash}, \ref{lem:innerqueue}, \ref{smallprimarystashlemma}, and \ref{queuelemma}  ensure that if $h_0, h_1, f_0$, and $f_1$ are fully random, none of the four events occur with probability at least $1-1/n^{\Omega(\log^{1/4}(n))}$. In order to show that the conclusion holds even if the hash functions are $\poly \log(n)$-wise independent, we apply a recent result of Braverman \cite{braverman09} stating that polylogarithmic independence fools constant-depth boolean circuits. 

\begin{theorem} \label{bravthm} (\cite{braverman09}) Let $s \geq \log m$ be any parameter. Let $F$ be a boolean function computed
by a circuit of depth $d$ and size $m$. Let $\mu$ be an $r$-independent distribution where
$$r \geq 3 \cdot 60^{d+3} \cdot (\log m)^{(d+1)(d+3)} \cdot s^{d(d+3)},$$
then
$|\E_\mu[F] - \E[F]| < 0.82s \cdot 15m$,
\end{theorem}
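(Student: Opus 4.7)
The plan is to reduce the statement that $r$-wise independence fools $F$ to the construction of a pair of \emph{sandwiching polynomials} for $F$, and then to build those polynomials using Håstad's switching lemma together with an $L_2$ approximation of $F$. Specifically, I would first show that it suffices to exhibit two real polynomials $F_\ell$ and $F_u$ in the input bits, each of degree at most $r$, satisfying $F_\ell(x) \le F(x) \le F_u(x)$ for every $x \in \{0,1\}^n$ and $\E[F_u - F_\ell] < \varepsilon$ under the uniform distribution, where $\varepsilon = 0.82^s \cdot 15m$. Granting such polynomials, the implication is immediate: any $r$-wise independent distribution $\mu$ agrees with uniform on expectations of degree-$r$ polynomials, so $\E_\mu[F_\ell] = \E[F_\ell] \le \E_\mu[F] \le \E[F_u] = \E_\mu[F_u]$, and the two outer quantities lie within $\varepsilon$ of $\E[F]$.

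To construct $F_u$ (and symmetrically $F_\ell$), I would proceed inductively on the depth $d$ of the circuit. At each level I would combine two tools. First, Håstad's switching lemma says that after a random restriction keeping each variable alive with probability $p \approx 1/(\text{poly log } m)$, every bottom-level CNF/DNF of width $w$ collapses to a decision tree of depth $O(s)$ with probability $1 - 2^{-\Omega(s)}$. Iterating the restriction $d$ times reduces the depth-$d$ circuit to a decision tree of depth $O(s)$ outside a set $B \subseteq \{0,1\}^n$ of measure at most $(\text{const})^d \cdot m \cdot 2^{-\Omega(s)}$. Second, a decision tree of depth $t$ is computed \emph{exactly} by a multilinear polynomial of degree $t$, so on $\{0,1\}^n \setminus B$ we obtain a polynomial $P$ of degree $O(s)$ matching $F$.

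The main obstacle, and the core of Braverman's argument, is turning this ``approximates $F$ except on a small bad set $B$'' guarantee into a genuine pointwise upper bound without blowing up the degree past the target $r$. My plan here is the Braverman patching trick: represent the indicator $\mathbf{1}_B$ as a sum of indicators of the restrictions on which the switching lemma failed, note that each such event is captured by a CNF of width $O(s)$ in the restricted variables, and approximate each such indicator using a further low-degree polynomial that dominates $\mathbf{1}_B$ pointwise while having small $L_1$ mass. Adding a large constant multiple of this patching polynomial to $P$ produces an upper sandwich $F_u$ whose degree is controlled by a product $d \cdot O(s)^{O(d)} \cdot \mathrm{polylog}(m)$, matching the bound on $r$ stated in the theorem.

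The last step is an accounting calculation: track the degree contributions from each level of the depth-$d$ induction and from each patching polynomial, and verify that the total degree is bounded by $3 \cdot 60^{d+3} \cdot (\log m)^{(d+1)(d+3)} \cdot s^{d(d+3)}$, while the total error $\E[F_u - F_\ell]$ is dominated by the switching-lemma failure probability summed over depths, giving the $0.82^s \cdot 15m$ bound. The hard part throughout is the patching construction; once that lemma is in hand, the sandwiching reduction and the degree bookkeeping are essentially routine.
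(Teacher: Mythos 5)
The paper does not prove this statement: it is quoted verbatim from Braverman (\cite{braverman09}) and used as a black box in Section 4.6.1 to argue that polylog-wise independent hash families suffice for the paper's dictionary constructions. There is therefore no in-paper proof to compare against, and your proposal cannot be judged as ``same approach'' or ``different route'' --- it is a proof sketch of the cited external theorem.

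On its own terms, your sketch is a faithful high-level recounting of Braverman's actual argument: reduce the claim that $r$-wise independence fools $F$ to constructing a pair of degree-$r$ sandwiching polynomials; obtain the raw approximation by combining H{\aa}stad's switching lemma (which collapses the depth-$d$ circuit to a shallow decision tree outside a small bad set) with a Linial--Mansour--Nisan style low-degree $L_2$ approximation; and then apply Braverman's patching trick --- dominating the indicator of the bad set by a further low-degree polynomial of small $L_1$ mass --- to upgrade ``agrees except on a small set'' to a genuine pointwise sandwich without blowing the degree past $r$. You correctly identify the patching lemma as the technically hard step; the sandwiching reduction and the degree bookkeeping are indeed routine once that is in hand. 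One remark worth making: the error bound as printed in the paper, $0.82s\cdot 15m$, is a typo and should read $0.82^s\cdot 15m$, exactly as you parsed it --- an expression linear in $s$ grows rather than decays, whereas the exponential factor $0.82^s$ is what makes the guarantee nontrivial as $s$ increases.
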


Theorem \ref{bravthm} implies that if we can develop constant-depth boolean circuits of quasi-polynomial size that recognize Events 1-4 above, then the 
probability any of the events occur under polylogarithmic independent hash functions will be very close to the probability the events occur under fully random hash functions. The circuits that recognize our events are similar to those used in Arbitman {\it et al}. \cite{arbitman2009amortized}; the input wires to the first two circuits contain the values $h_0(x_1), h_1(x_1), . . . , h_0(x_{n}), h_1(x_{n})$ (where
the $x_i$Õs represent the elements inserted into the outer cuckoo table), while the input wires to the second two circuits 
contain the values $f_0(x_1), f_1(x_1), . . . , f_0(x_{j}), f_1(x_{j})$, where
$j$ is the number of items in the inner cuckoo table.

\begin{enumerate}

\item \textbf{Identifying Event 1}: Just as in \cite{arbitman2009amortized}, this event occurs if and only if the graph contains at least one forest from
a specific set of forests of the bipartite graph on $[m]\times[m$], where $m=(1+\epsilon)n$. We denote this set of
forests by $\mathcal{F}_n$, and observe that $\mathcal{F}_n$ is a subset of all forests with at most $cN = O(\log^6(n))$ vertices,
which implies that $|\mathcal{F}_n| = n^{\poly \log(n)}$. Therefore, the event can be identified by a constant-depth
circuit of size $n^{\poly \log(n)}$ that simply enumerates all forests inn $\mathcal{F}_n$, and for every such forest checks whether it
exists in the graph.

\item \textbf{Identifying Event 2}: A constant-depth circuit identifying this event enumerates over all $S \subseteq [m]\times[m]$ of size $\log^{1/2}(n)$
and checks whether all of elements of $S$ are stashed. As in \cite{arbitman2009amortized}, a minor complication is that we must define a canonical set of stashed elements for each set $S$; this is only for simplifying the analysis, and does not require modifying our actual construction. Our circuit checks whether all elements of $S$
are stashed by, for each $x \in S$, enumerating over all connected components in which edge $(h_0(x), h_1(x))$ is stashed according to the canonical set of stashed items for $S$ and checking if the component exists in the graph. We may assume Event 1 does not occur, and thus we need only iterate over components of $O(\log^6(n))$ vertices. The circuit thus has $O(n^{\poly\log(n)})$ size.

\item \textbf{Identifying Events 3 and 4:} We may assume Events 1 and 2 do not occur. Then there are at most $O(\log^6 n)$ edges in the inner cuckoo table,
so a constant depth circuit of quasipolynomial size simply enumerates over all possible edge sets $E'$ of size $O(\log^6 n)$ satisfying Event 3 or Event 4
and checks if $E'$ equals the input to the circuit.
\end{enumerate}

Thus, we can set $s=\poly\log(n)$ and $m=\poly\log(n)$ in the statement of Theorem \ref{bravthm} to conclude that, even if we use hash functions from a  $\poly \log(n)$-wise independent family of functions, Events 1-4 still only occur with negligible probability. We remark that similar arguments demonstrate
that when $\poly\log(n)$-wise independent hash functions are used, all operations under Construction 1 still take $O(s)$ time with probability $1-1/m^{\Omega(\sqrt{s})}$ when $s=\poly\log(n)$ (the amount of independence required depends on $s$). 

\subsubsection{Sufficiently Independent Hash Functions Evaluated in $O(1)$ time}
\label{sec:weakhash}
Unfortunately, all known constructions of $\poly \log(n)$-wise independent hash families that can be evaluated in the RAM model while maintaining 
our $O(n)$ space bound come with important caveats. The classic construction of $k$-wise independent hash functions due to Carter and Wegman based on degree-$k$ polynomials over finite fields requires time $O(k)$ to evaluate; ideally we would like $O(1)$ evaluation time to maintain our time bounds.
The work of Siegel \cite{siegel04} is particularly relevant; for polynomial-sized universes, he proves the existence of a family of $n^{\epsilon}$-wise independent hash functions for $\epsilon>0$ (which is super-logarithmic), which can be
evaluated in $O(1)$ time using look up tables of size $n^{\delta}$ for some $\delta<1$. However, his construction is non-uniform in that he relies on the existence
of certain expanders for which we do not possess explicit constructions. 
Subsequent works that improve and/or simplify \cite{siegel04} (e.g. \cite{dw03, pp08, dr09}) all possess polynomial probabilities of failure, which render them
unsuitable when seeking guarantees that hold with overwhelming probability.
The development of uniformly computable hash families which can be evaluated in $O(1)$ time using $o(n)$ words of memory remains
an important open question.

\subsubsection{Achieving Loads Close to One}
\label{sec:betterloads}
We remark that we can achieve $O(1)$ worst-case operations with overwhelming probability using $(1+\epsilon)n$ words of memory for any constant $\epsilon > 0$
by substituting our fully de-amortized nested cuckoo hash tables in for the ``backyard'' cuckoo table of Arbitman {\it et al}. \cite{ans-bchcw-10}. The construction of \cite{ans-bchcw-10} uses a main table consisting of $(1+\epsilon/2) d$ buckets, each of size $d$ for some constant $d$, and 
uses a de-amortized cuckoo table as a ``backyard'' to handle elements from overflowing buckets. They show that for constant $\epsilon > 0$,
with overwhelming probability the backyard cuckoo table must only store a small constant fraction of the elements.
Note that $n^{\alpha}$-wise independent hash functions for some $\alpha>0$ are required to map items to buckets in the main table, and therefore the technique cuts our space usage by a factor of about 2, but increases the amount of independence we need to assume in our hash functions for theoretical guarantees to hold.

\vspace{-4mm}
\section{Cache-Oblivious Multimaps} \label{sec:multimap}
In this section, we describe our cache-oblivious implementation of
the multimap ADT.
To illustrate the issues that arise in the construction, we first give a simple implementation for a RAM, and then give an improved
(cache-oblivious) construction for the external memory model.  Specifically, we describe an amortized cache-oblivious solution
and then we describe how to de-amortize this solution.

In the implementation for the RAM model, we maintain two nested cuckoo hash tables, as described in Section \ref{sec:cuckoo}. The first
table enables fast $\text{containsItem}(k, v)$ operations; this table stores all the $(k,v)$ pairs using each
entire key-value pair as the key, and the value associated with $(k, v)$ is a pointer to $v$'s entry in a linked list $L(k)$ containing all values associated with $k$ in the multimap. The second table ensures fast $\text{containsKey}(k)$, $\text{getAll}(k)$, and $\text{removeAll}(k)$ operations:
this table stores all the unique keys $k$, as well as a pointer to the head of $L(k)$. 

\medskip
\noindent \textbf{Operations in the RAM implementation.}

\begin{enumerate} 
\item $\text{containsKey}(k)$: We perform a lookup for $k$ in Table~2.
\item $\text{containsItem}(k,v)$: We perform a lookup for $(k, v)$ in Table~1.
\item $\text{add}(k,v)$: We add $(k, v)$ to Table~1 using the insertion procedure of Section \ref{sec:cuckoo}. We perform
a lookup for $k$ in Table~2, and if $k$ is not found we add $k$ to Table 2. We then insert $v$ as the head of the linked list corresponding to Table 2. 
\item $\text{remove}(k,v)$: We remove $(k, v)$ from Table~1, and remove $v$ from the linked list $L(k)$; if $v$ was the head of $L(k)$, we also perform a lookup for
$k$ in Table~2 and update the pointer for $k$ to point to the new head of $L(k)$ (if $L(k)$ is now empty, we remove $k$ from Table~2.)
\item $\text{getAll}(k)$: We perform a lookup for $k$ in Table~2 and return the pointer to the head of $L(k)$.
\item $\text{removeAll}(k)$: We remove $k$ from Table~2. In order to achieve unamortized $O(1)$ I/O complexity, we do \emph{not}
update the corresponding pointers of $(k,v)$ pairs in Table~1; this creates the presence of ``spurious''
pointers in Table~1, but Angelino {\it et al}. \cite{agmt-emm-11} explain how to handle the presence of such spurious
pointers while increasing the cost of all other operations by $O(1)$ factors.
\end{enumerate}

All operations above are performed in $O(1)$ time in the worst case with overwhelming probability by the results of Section \ref{sec:cuckoo}. 
Two major issues arise in the above construction. First, the space-usage remains $O(n)$ only if we assume the existence of a garbage-collector for leaked memory, as well as a memory allocation mechanism, both of which must run in $O(1)$ time in the worst case. Without the memory allocation mechanism, inserting $v$ into $L(k)$  cannot be done in $O(1)$ time, and without the garbage collector for leaked memory, space cannot be reused after $\text{remove}$ and $\text{removeAll}$ operations. Second, in order to extract the actual values from a $\text{getAll}(k)$ operation, one must actually traverse the list $L(k)$. Since $L(k)$ may be spread all over memory, this suffers from poor locality. 

We now present our cache-oblivious multimap implementation. Our implementation avoids the need for garbage collection, and circumvents the poor locality 
of the above $\text{getAll}$ operation. We do require a cache-oblivious mechanism
to allocate and deallocate power-of-two sized memory blocks with constant-factor space and I/O overhead; this assumption is theoretically justified by the results of Brodal {\it et al}. \cite{buddysystem}.

\medskip
\noindent \textbf{Amortized Cache-Oblivious Multimaps.}
As in the RAM implementation, we keep two nested cuckoo tables. In Table~1, we store all the $(k,v)$ pairs using each
entire key-value pair as the 
key. With each such pair, we store a count, which identifies an ordinal 
number for this value $v$ associated with this key, $k$, starting
from $0$. For example, if
the keys were (4, Alice), (4, Bob), and (4, Eve), then (4, Alice) might be pair~0, 
(4, Bob) pair~1, and (4, Eve) pair~2, all for the key, 4.

In Table~2, we store all the unique keys.
For each key, $k$, we store a pointer to an array, $A_k$, that stores all the
key-value pairs having key $k$, stored in order
by their ordinal values from Table~1.
With the record for 
a key $k$, we also store $n_k$, the number of pairs
having the key $k$, i.e., the number of key-value pairs in $A_k$.
We assume that each $A_k$ is maintained as an array that supports
amortized $O(1)$-time element access and addition, while maintaining
its size to be $O(n_k)$.

\medskip
\noindent \textbf{Operations.}
\begin{enumerate}
\item $\text{containsKey}(k)$: We perform a lookup for $k$ in Table~2.
\item $\text{containsItem}(k,v)$: We perform a lookup for $(k, v)$ in Table~1.
\item $\text{add}(k,v)$: After ensuring that $(k,v)$ is not already in the multimap by looking it up in Table~1,
we look up $k$ in Table~2, and add $(k,v)$ at index $n_k$
of the array $A_k$, if $k$ is present in this table. 
If there is no key $k$ in Table~2, then we allocate an array, $A_k$, 
of initial constant size. Then we add $(k,v)$ to $A_k[0]$ and add
key $k$ to Table~2.
In either case,
we then add $(k,v)$ to Table~1, giving it ordinal $n_k$,
and increment the value of $n_k$ associated with $k$ in Table~2. This operation may additionally 
require the growth of $A_k$ by a factor of two, which would then necessitate copying all elements 
to the new array location and updating the pointer for $k$ in Table~2.
\item $\text{remove}(k,v)$: We look up $(k,v)$ in Table~1 and get its ordinal 
count, $i$. Then we remove $(k,v)$ from Table~1, and we look up $k$ in Table~2, to 
learn the value of $n_k$ and get a pointer to $A_k$.
If $n_k > 1$, we swap $(k',v')=A_k[n_k-1]$ and $(k,v)=A_k[i]$, and then 
remove the last element of $A_k$.
We update the ordinal value of $(k',v')$ in Table~1 to now
be $i$.
We then decrement the value of $n_k$ associated with $k$ in Table~2.
If this results in $n_k=0$, we remove $k$ from Table~2. 
This operation may additionally require the
shrinkage of the array $A_k$ by a factor of $2$, so as to maintain the $O(n)$ space bound.
\item $\text{getAll}(k)$: We look up $k$ in Table~2, and then list
the contents of the $n_k$ elements stored at the array $A_k$ indexed
from this record.
\item $\text{removeAll}(k)$: For all entries $(k, v)$ of $A_k$, we remove $(k,v)$ from Table~1. We also remove $k$ from Table~2 and deallocate
the space used for $A_k$. As in the RAM implementation, in order to achieve unamortized $O(1)$ I/O cost, we do not
update the pointers of $(k,v)$ pairs in Table~1; this creates the presence of ``spurious''
pointers in Table~1 which are handled the same as in the RAM case.

\end{enumerate}

In terms of I/O performance, $\text{containsKey}(k)$ and  $\text{containsItem}(k,v)$ clearly require $O(1)$ I/Os in the worst case.
$\text{getAll}(k)$ operations use $O(1+n_k/B)$ I/Os in the worst case, because scanning an array of size $n_k$ uses
$O(\lceil n_k/B\rceil)$ I/Os, even though 
we don't know the value of $B$. $\text{removeAll}(k)$ utilizes $O(n_k)$ I/Os in the worst-case with overwhelming probability,
but these can be charged to the insertions of the $n_k$ values associated with $k$, for $O(1)$ amortized I/O cost.
$\text{add}(k,v)$ and $\text{remove}(k,v)$ operations also require $O(1)$ amortized I/Os with overwhelming probability;
the bound is amortized because there is a chance this operation will require a growth or
shrinkage of the array $A_k$, which may require moving all $(k,v)$ values associated with $k$ and updating the corresponding pointers in Table~1.

In the next section, we explain how to deamortize $\text{add}(k, v)$ and $\text{remove}(k, v)$ operations.

\medskip
\noindent \textbf{De-Amortizing the Key-Value Arrays.}
To de-amortize
the array operations, we use a rebuilding 
technique, which is standard in de-amortization methods (e.g.,
see~\cite{kp-daa-98}).

We consider the operations needed for insertions to an array;
the methods for deletions are similar.
The main idea is that we allocate arrays whose sizes are powers
of 2. Whenever an array, $A$, becomes half full, we allocate an array,
$A'$, of double the size and start copying elements $A$ in $A'$.
In particular, we maintain a crossover index, $i_A$, which indicates
the place in $A$ up to which we have copied its contents into $A'$.
Each time we wish to access $A$ during this build phase, we
copy two elements of $A$ into $A'$, picking up at position $i_A$, and updating
the two corresponding pointers in Table~1.
Then we perform the access of $A$, as would would otherwise, except
that if we wish access an index $i<i_A$, then we actually perform 
this access in $A'$. 
Since we copy two elements of $A$ for
every access, we are certain to complete the building of $A'$ 
prior to our needing to allocate a new, even larger array, even if all
these accesses are insertions.
Thus, each access of our array will now complete in worst-case $O(1)$
time with overwhelming probability.
It immediately follows that $\text{add}(k, v)$ and $\text{remove}(k, v)$ operations
run in $O(1)$ worst-case time. 
All time bounds in Table~\ref{tbl:bounds} follow.

\vspace{-4mm}



\section{Conclusion}
In this paper, we have studied fully de-amortized 
dictionary and multimap algorithms that 
support worst-case constant-time operations with 
high or overwhelming probability. At the
core of our result is a ``nested'' cuckoo hash construction, 
in which an inner cuckoo table
is used to support fast lookups into a queue/stash structure 
for an outer cuckoo table, 
as well as a simplified and improved implementation of an 
\emph{atomic stash}, which is related
to the atomic heap or q-heap data structure of 
Fredman and Willard \cite{fredmanwillard}. We gave
fully de-amortized constructions with guarantees that 
hold with high probability in the Practical RAM
model, and with overwhelming probability in  the external-memory
(I/O) model, the standard RAM model, or the AC$^0$ RAM model.

Several interesting questions remain for future work. 
First, lookups in our structure may require four or more 
I/Os in external-memory; 
it would be interesting to develop fully de-amortized structures 
supporting lookups in as few
as two I/Os. 
A prime possibility suited for external memory 
is random-walk cuckoo hashing with two hash functions and super-constant bucket sizes. 
Second, it would be interesting to develop a 
fully-deamortized dictionary for the Practical RAM model where all
operations take $O(1)$ time with \emph{overwhelming} probability.

\subsection*{Acknowledgments}
This research was supported in part by the U.S.~National Science
Foundation, under grants 0713046, 0830403, and 0847968, 
and by an Office
of Naval Research: Multidisciplinary University Research Initiative
(MURI) Award, number N00014-08-1-1015. Justin Thaler is supported by the Department of
Defense (DoD) through the National Defense Science \& Engineering Graduate Fellowship (NDSEG) Program.
Michael Mitzenmacher was supported in part by 
the U.S.~National Science Foundation, under grants 0964473, 0915922, and 0721491.

\bibliographystyle{abbrv}
\bibliography{cuckoo,cuckoo2,extra2,goodrich,range,par}

\begin{appendix}
\section{Our Atomic Stash Structure}
\label{app:stash}
\medskip\noindent
In this section, we describe our deterministic 
\emph{atomic stash} implementation
of the dictionary ADT.
This structure dynamically
maintains a sparse set, $S$, of
at most $O(w^{1/2})$ key-value pairs, using $O(w)$ words of memory,
so as to support insertions, deletions,
and lookups in $O(1)$ worst-case time, where $w$ is our computer's
word size.
Our construction is valid in the I/O model and the
AC0 RAM model~\cite{thorup07},
that is,
the RAM model minus constant-time multiplication and division,
but augmented with constant-time AC0 instructions that
are included in modern CPU instruction sets.
(See previous work on the atomic heap~\cite{fredmanwillard}
data structure for a 
solution in the standard RAM model, albeit in a way that limits the
size of $S$ to be $O(w^{1/6})$ and is less
efficient in terms of constant factors.)
We assume that keys and values can
each fit in a word of memory.

Our construction builds on the fusion tree and atomic heap
implementations of previous 
researchers~\cite{Andersson1999337,fredmanwillard,thorup07}, but improves the
simplicity and capacity of these previous implementations by
taking advantage of modern CPU instruction sets and the fact that we
are interested here in maintaining only a simple dictionary, rather
than an ordered set.
For instance, our solution avoids any lookups in pre-computed
tables.

\subsection{Components of an Atomic Stash}
\medskip\noindent
Our solution is based on associating
with each key $k$ in $S$, a compressed key, $B_k$, 
of size $w'=\lfloor w^{1/2}\rfloor-1$, which we store as
a representative of $k$.
In addition, for each compressed key, $B_k$,
we store a binary
mask, $M_k$, of equal length, such that
\[
B_k \wedge M_k \not= B_j \wedge M_k,
\]
for $j\not=k$, with $j$ in $S$, where ``$\wedge$'' denotes bit-wise
AND.
That is, all of the masked keys in $S$ are unique.

A critical element in our data structure
is an \emph{associative cache}, $X$, which is stored in a single
word, which we view as being divided into 
$w'$ fields of size
$w'$ each, plus a (high-order)
indicator bit for each field, so $w\ge w'(w'+1)$.
We denote the field with index $i$ in $X$ as $X(i)$ and 
the indicator bit with index $i$ (for field $i$)
in $X$ as $X[i]$, and we assume that
indices are increasing right-to-left and begin with index $1$.
Thus, the bit position of $X[j]$ in $X$ is $j(w'+1)$.
Note that with standard SHIFT, OR, and AND operations,  
we can read or write any field or indicator bit in $X$ in $O(1)$ time given its
index and either a single bit or 
a bit mask, $\vec 1$, of $w'$ 1's.
Each field $X(i)$ is either empty or it stores a compressed key, $B_k$, for 
some key $k$ in $S$.
In addition, we also maintain a word, $Y$, with indices corresponding
to those in $X$, such that $Y(i)$
stores the mask, $M_k$, if $X(i)$ stores the binary key $B_k$.
We also maintain key and value arrays, $K$ and $V$,
such that, if $B_k$ is stored in $X(i)$, then we store the 
key-value pair $(k,v)$ in $K$ and $V$,
so that $K[i]=k$ and $V[i]=v$.
To keep track of the size of $S$,
we maintain a count, $n_S$, which is the number of items in $S$.
Finally, we maintain a ``used'' mask, $U$, which has all 1's in
each field of $X$ that is used. 
That is, $U(i)={\vec 1}$ iff $X(i)$ holds some key, $B_k$, which is
equivalent to saying $Y(i)\not=0$.

The reason we use both a compressed key and a mask for each key
stored in $S$ is that, as we add keys to $S$, we may need to
sometimes expand the function that compresses keys so that the
masked values of keys remain unique, while still fitting in a field of $X$.
Even in this environment, we would like for
previously-compressed keys to still be valid.
In particular, our method generates a sequence of compression
functions, $p_1$, $p_2$, etc., so that $p_i$ returns a $w'$-bit
string whose first $i$ bits are significant (and can be either $0$ or
$1$) and whose remaining bits are all $0$'s.
In addition, if we let $M^d$ denote a bit mask
with $d$ significant $1$ bits and $w'-d$ following $0$ bits, then,
for $d\ge 2$,
\[
p_d(k) \wedge M^{d-1} = p_{d-1}(k) ,
\]
for any key $k$.

\subsection{Operations in an Atomic Stash}
\medskip\noindent
Let us now describe how we perform the various update and query operations on an
atomic stash.
Assume we may use the following primitive operations in our
methods:
\begin{itemize}
\item
A binary operator,
``$\oplus$,'' 
which denotes the bit-size XOR operation.
\item
DUPLICATE$(B)$: return a
word $Z$ having the binary key $B$ (of size $w'$)
stored in each of its fields.
(Note: we can implement DUPLICATE either using a single multiplication
or using $O(1)$ instructions of a modern CPU.)
\item
VecEQ$(W,B)$: given a word $W$ and a binary key $B$
(of size $w'$),
set the indicator bit $W[i]=1$ iff $B=W(i)$.
This operation can be implemented using standard AC0
operations~\cite{Andersson1999337,thorup07}.
\item
MSB$(W)$: return the index of the most significant $1$ bit in the word
$W$, or $0$ if $W=0$. 
As Thorup observed~\cite{thorup07},
this operation can be implemented, for example, by converting $W$ to
floating point and returning the exponent plus 1.
\end{itemize}

We perform a getIndex$(k)$ operation, which returns the index of key $k$ 
in $X$, or $0$ if $k$ is not a key in $S$, as follows.
We assume in this method that we have access to the current
compression function, $p_d$.
\begin{itemize}
\item getIndex$(k)$:
\begin{algorithmic}
\STATE $B\leftarrow p_d(k)$
\STATE $Z\leftarrow$ DUPLICATE$(B)$
\STATE $T\leftarrow Y\wedge (X\oplus Z)$
\STATE $R\leftarrow U\oplus T$
\STATE VecEQ$(R,{\vec 1})$
\STATE \textbf{return} MSB$(R)/(w'+1)$
\end{algorithmic}
\end{itemize}

The correctness of this method
follows from the fact that
$B$ is the key in $S$ at index $i$ associated with $k$ 
iff $T(i)$ is all 0's and index $i$
is being used to store a key from $S$, since all masked keys in $S$
are unique
(and if $B$ is not in $X$, then MSB$(R)=0$).
Also, if one desires that we should avoid integer division,
then we can define $w'$ so that $(w'+1)$ is a power of $2$, 
while keeping $w\ge w'(w'+1)$,
so that above division can be done as a SHIFT.
In addition, note that we can implement a get$(k)$ operation, by
performing a call to getIndex$(k)$ and, if the index, $i$, returned
is greater than $0$, then returning $(K[i],V[i])$.

To remove the key-value pair from $S$ associated with a key $k$ in
$S$, we perform the following operation:
\begin{itemize}
\item remove$(k)$:
\begin{algorithmic}
\STATE $i\leftarrow$ get$(k)$
\STATE $X(i)\leftarrow X(n_S)$
\STATE $Y(i)\leftarrow Y(n_S)$
\STATE $U(n_S)\leftarrow 0$
\STATE $K[i]\leftarrow K[n_S]$ 
\STATE $V[i]\leftarrow V[n_S]$ 
\STATE $n_S \leftarrow n_S - 1$
\end{algorithmic}
\end{itemize}

Our insertion method, which follows, assumes that we have 
access to the current compression function, $p_d$,
as well as the mask, $M^d$, of $d$ 1's followed by $(w'-d)$ 0's.
We also assume we know the current value of (the global variable)
$d$ and that we have a
function, P\_EXPAND$(k_1,k_2)$, which takes two distinct keys, $k_1$ and $k_2$,
and expands the compression function from $p_d$ to $p_{d+1}$ so that
$p_{d+1}(k_1)\not= p_{d+1}(k_2)$.
This method also defines $M^{d+1}$ to consist of $(d+1)$ significant
$1$ bits and $w'-(d+1)$ trailing $0$'s, and it increments $d$.
\begin{itemize}
\item add$(k,v)$:
\begin{algorithmic}
\STATE $i\leftarrow$ get$(k)$
\IF[We had a collision.] {$i>0$}
\STATE P\_EXPAND$(k,K[i])$ \hspace{1em} \{Also increments $d$\}
\STATE $X(i) \leftarrow p_{d}(K[i])$
\STATE $Y(i) \leftarrow M^{d}$
\ENDIF
\STATE $n_S \leftarrow n_S + 1$
\STATE $X(n_S)\leftarrow p_d(k)$
\STATE $Y(n_S)\leftarrow M^d$
\STATE $U(n_S)\leftarrow {\vec 1}$
\STATE $K[n_S]\leftarrow k$
\STATE $V[n_S]\leftarrow v$
\end{algorithmic}
\end{itemize}

Since the compressed keys in $S$
form a set of unique masked keys, the new key,
$k$, will collide with at most one of them, even when compressed. 
So, it is sufficient
in this case that we expand the compression function
so that it distinguishes $k$ and this one colliding field.
Thus, a simple inductive argument, we maintain the property that
all the masked keys in $S$ are unique.

\subsection{Compressing Keys and De-Amortization}
\medskip\noindent
Let us now describe how we compress keys and expand the compression
function, as well as how we perform the necessary de-amortization so
that the size of compressed keys is never more than $w'$.

Our method makes use of the following primitive AC0
operation~\cite{Andersson1999337,thorup07},
which can also be computed from
included primitives in modern CPU instruction sets.
\begin{itemize}
\item
SELECT$(W,k)$: Given a word $W$ and key $k$, 
the fields of $W$ are viewed as bit pointers. A length $w'$ bit
string, $B$, is returned so that the $i$-th bit 
of $B$ equals the $W(i)$-th bit of $k$.
\end{itemize}

Our compression function is encoded in terms of the counter, $d$, and
a word, $W$, that encodes the bits to be selected from keys, so that
$W(i)$ is the index of the $i$th bit of $k$ in the output of
$p_d(k)$, for $i\le d$. For $i>d$, $W(i)=0$.
Thus, we
define $p_d$ simply as follows.
\begin{itemize}
\item
$p_d(k)$:
\begin{algorithmic}
\STATE \textbf{return} SELECT$(W,k)$
\end{algorithmic}
\end{itemize}

Thus, we also have a simple definition for P\_EXPAND:
\begin{itemize}
\item
P\_EXPAND$(k_1,k_2)$:
\begin{algorithmic}
\STATE $d\leftarrow d+1$
\STATE $W(d)\leftarrow {\rm MSB}(k_1\oplus k_2)$
\STATE $M^d \leftarrow M^{d-1} \vee (1\ \  {\rm SHIFT}\ \  d)$
\end{algorithmic}
\end{itemize}

Note that in the way we are
using the expansion function, $p_d$, we only ever add bits to
the ends of our compressed keys.
We never remove bits. This allows keys compressed under previous
instantiations of the compression function to continue to be used.
But this also implies that, as we continue to add keys to $S$,
we might run out of fields to use, even if we keep the size, $n_S$, 
of $S$ to be at most, say, $w'/2$.
Of course, we could use a standard amortized rebuilding
scheme to maintain $d$ to be at most $w'$, but this would require
using amortization instead of achieving true worst-case constant-time
bounds for our updates and lookups.

As a de-amortization technique, therefore,
let us revise our construction, so that,
whenever $d > w'/2$, we create a new, initially empty,
atomic stash.
For each additional atomic-stash 
operation after this initialization, we remove
two items from the old atomic stash and add them to this
new atomic stash.
In performing accesses and updates during this time, we also keep
a crossover index, $x$, so that references to fields and indicator indices less
than or equal to $x$ are done in the new stash and references to fields
and indicator indices greater than $x$ are done in the old stash.
Thus, after $w'/2$ additional operations, we can
discard the old stash (for which $d\le w'$) and fully replace it with the
new one (for which $d\le w'/2$).
Therefore, so long as $n_S\le w'/2$, we can perform all insertion,
deletion, and lookup operations in worst-case $O(1)$ time, which in
the I/O model corresponds to $O(1)$ I/Os.

\end{appendix}

\end{document}